%% file: main.tex
\documentclass[sigconf,nonacm]{aamas}
\pdftrailerid{}
\usepackage{balance}
\usepackage{booktabs}
\usepackage{array}
\usepackage{graphicx}
\usepackage{amsmath}

\makeatletter
\@ifundefined{proposition}{\theoremstyle{plain}\newtheorem{proposition}{Proposition}}{}
\makeatother

\usepackage{pdfpages}

\setcopyright{none}
\AtBeginDocument{%
  \fancypagestyle{standardpagestyle}{\fancyhf{}%
    \fancyfoot[C]{\footnotesize\thepage}}%
  \fancypagestyle{firstpagestyle}{\fancyhf{}%
    \fancyfoot[C]{\footnotesize\thepage}}%
  \pagestyle{standardpagestyle}}

\input{tables/num_selfplay.tex}


\input{tables/num_scripted.tex}

\input{tables/num_mixed.tex}

\input{tables/num_selection.tex}



\newcommand{\popt}{p^{*}}

\title[Cooperation Without Adaptation: LLM Agents in a Collective-Risk Dilemma
 ]{Cooperation Without Adaptation: LLM Agents in a Collective-Risk Dilemma}

\author{Trung-Kiet Huynh}
\authornote{These authors contributed equally to this work.}
\affiliation{%
  \institution{University of Science, VNU-HCM}
  \city{Ho Chi Minh City}
  \country{Vietnam}}
\email{23122039@student.hcmus.edu.vn}

\author{Dao-Sy Duy-Minh}
\authornotemark[1]
\affiliation{%
  \institution{University of Science, VNU-HCM}
  \city{Ho Chi Minh City}
  \country{Vietnam}}
\email{23122041@student.hcmus.edu.vn}

\author{Chi-Nguyen Tran}
\authornotemark[1]
\affiliation{%
  \institution{University of Science, VNU-HCM}
  \city{Ho Chi Minh City}
  \country{Vietnam}}
\email{23122044@student.hcmus.edu.vn}

\author{Phu-Hoa Pham}
\affiliation{%
  \institution{University of Science, VNU-HCM}
  \city{Ho Chi Minh City}
  \country{Vietnam}}
\email{23122030@student.hcmus.edu.vn}

\author{Phu-Quy Nguyen-Lam}
\affiliation{%
  \institution{University of Science, VNU-HCM}
  \city{Ho Chi Minh City}
  \country{Vietnam}}
\email{23122048@student.hcmus.edu.vn}

\author{Le Hong Trang}
\authornote{Corresponding authors.}
\affiliation{%
  \institution{Ho Chi Minh City University}
  \institution{of Technology, VNU-HCM}
  \city{Ho Chi Minh City}
  \country{Vietnam}}
\email{lhtrang@hcmut.edu.vn}

\author{The Anh Han}
\authornotemark[2]
\affiliation{%
  \institution{Teesside University}
  \city{Middlesbrough}
  \country{United Kingdom}}
\email{t.han@tees.ac.uk}

\renewcommand{\shortauthors}{Huynh, Duy-Minh, Tran et al.}

\begin{abstract}
Language-model agents are increasingly deployed to act on behalf of people in settings where individual decisions shape shared outcomes. We examine whether five production models adapt their contributions to changing risks and partner behaviour, or instead follow a relatively fixed strategy. In a collective-risk dilemma game (CRG), six players contribute to a shared pool over ten rounds; if the pool falls short of its target, a catastrophe may destroy their remaining resources. Four models contributed approximately an equal share at every positive risk level, and four of the five also contributed when risk was zero, even though contributing could not improve their payoff. Explicitly stating that only the agent’s own payoff mattered reduced zero-risk contributions for most models, while a more neutral prompt did not. Several models correctly identified which action offered the higher expected payoff at different risk levels, yet their contributions changed little. Against fixed partners, no model consistently chose the payoff-maximising response, and in mixed-model groups, a single agent’s shortfall in the final round could cause an otherwise successful group to fail. Payoff-based selection also depended on population composition: in a single six-agent group, it favoured the model that contributed least and reduced group success, whereas in larger, randomly mixed populations, it favoured an exact fair-share model at moderate and high risk. These findings show that cooperative self-play alone is not evidence of strategic adaptability. Consistent with a machine-behaviour approach, evaluations of agents in multi-agent systems should test how they respond to incentives, partner actions, and population composition.
\end{abstract}

\keywords{LLM agents, collective-risk social dilemma, cooperation, best response, empirical game-theoretic analysis, multiagent evaluation}

\ccsdesc[500]{Computing methodologies~Multi-agent systems}
\begin{document}

\maketitle

\section{Introduction}
\label{sec:intro}
\begin{figure*}[t]
\centering
\includegraphics[width=0.9\textwidth]{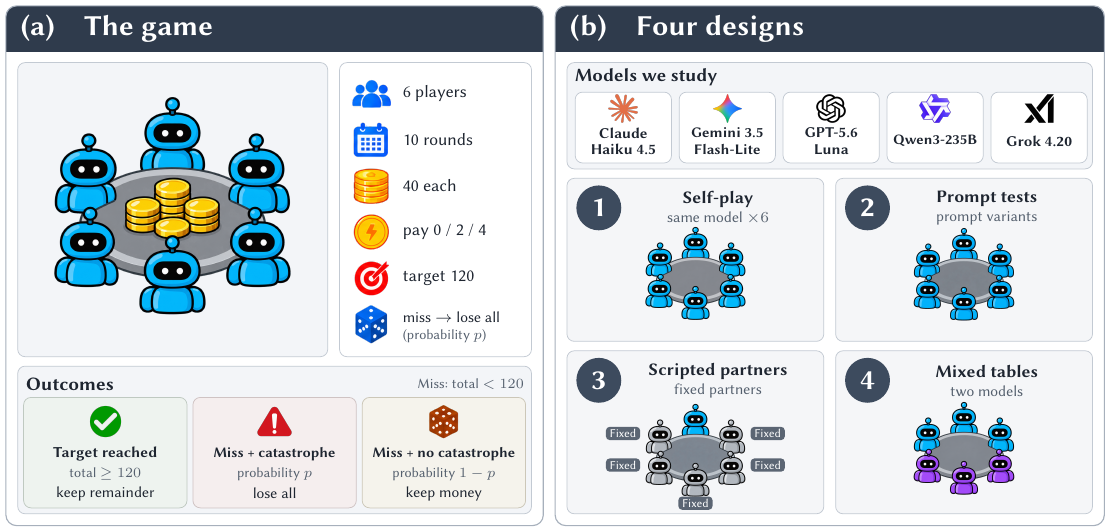}
\Description{A two-part overview. Left, the game: six robot players sit around a table with a pile of coins, beside a list of the rules: six players, ten rounds, 40 units each, contribute 0, 2 or 4, a target of 120, and a missed target loses everything with probability p. Below, three outcome boxes: target reached, a total of at least 120, keep the remainder; a miss with catastrophe, probability p, lose everything; a miss without catastrophe, probability 1 minus p, keep the money. Right, the logos and names of the five models, Claude Haiku 4.5, Gemini 3.5 Flash-Lite, GPT-5.6 Luna, Qwen3-235B and Grok 4.20, above four numbered design cards, each drawing a table of six robots: one, self-play, six copies of one model; two, prompt tests, the same table under prompt variants; three, scripted partners, one robot beside five grey robots labelled fixed; four, mixed tables, robots of two colours for two models.}
\caption{Overview. (a)~The game (also referred to as a table or group, Section~\ref{sec:rules}): six players hold 40 units each and have to decide to contribute either 0, 2 or 4 into a shared pool in each of ten rounds; if the pool misses the target of 120, a catastrophe destroys all their savings with probability $p$. (b)~The five models and the four designs (Table~\ref{tab:designs}): (1)~self-play over the risk grid; (2)~self-play under prompt and decoding controls, and with in-game value questions; (3)~one model beside five scripted partners that follow fixed rules; (4)~two models at one table, one of them in one to five of the six seats and the other in the rest. Selection (Section~\ref{sec:designs}) uses the base-prompt payoffs of designs 1, 2 and 4. %
}
\label{fig:overview}
\end{figure*}

Language-model agents increasingly act on behalf of people in settings where multiple agents jointly shape a shared outcome, such as budget allocation, shared compute use, or negotiations over common resources \cite{chan2024visibility,hammond2025multi}. These systems are therefore not only engineering artifacts: their behaviour in interactive settings is also an object of empirical study, as emphasized by the emerging field of machine behaviour \cite{rahwan2019machine}. In many such settings, avoiding collective loss requires agents to incur individual costs for the benefit of the group~\cite{santos2011risk,santos2019outcome,milinski2008collective}. This raises a central question for multi-agent systems (MAS) research: do language-model agents condition their cooperative behaviour on the strategic environment, or do they rely on relatively fixed behavioural defaults? From a system-design and machine-behaviour perspective, this question has three components. First, is an agent’s cooperation sensitive to the magnitude of collective risk? Second, does it respond to the behaviour of other agents in the group? Third, under selection pressures that retain or replicate higher-earning agents, which behavioural tendencies are favoured?

Success in self-play, a common criterion in MAS benchmarks~\cite{piatti2024cooperate,huang2025competing}, does not by itself answer these questions. When all positions in a group are occupied by copies of the same model, an agent that conditions its decisions on the stakes may generate the same observed pattern of play as an agent that follows a fixed rule. A machine-behaviour approach therefore calls for testing agents across systematically varied environments, rather than inferring their decision policies from a single setting. Success rates likewise obscure the extent to which a group falls short of the attainable optimum: a group may achieve the collective target while still dissipating a substantial share of its endowment. Moreover, a prompt that specifies both a group target and a fixed number of rounds may implicitly suggest an equal division of the burden, causing agents that simply reproduce this split to appear cooperative under conventional summary measures.

We study these questions using the collective-risk social dilemma game (CRG), a game introduced to study collective decision making for climate change mitigation~\cite{milinski2008collective}. Players contribute into a shared pool over several rounds, and if the pool misses a target, they all lose their savings with probability $p$ (Figure~\ref{fig:overview}a). For risk-neutral players, group welfare is maximised when no one contributes if $p < \tfrac{1}{2}$, and when the group meets the target exactly if $p > \tfrac{1}{2}$, for example through the \emph{fair share}, defined as an equal split of the target. The game also contains two situations in which contributing  cannot raise a player's own payoff: when there is no risk, and once the outcome is settled. Neither risk attitude nor concern for the group can explain  contribution in these situations, so they separate a default from a response to the stakes.

We introduce a game-theoretic evaluation that separates cooperative outcomes from incentive-sensitive behaviour by combining risk variation, dominated-action tests, exact best responses to fixed partners and selection in mixed populations, where each model is a strategy of an empirical game~\cite{wellman2006methods,tuyls2018generalised}. We apply it to five production models, Claude Haiku 4.5, Gemini 3.5 Flash-Lite, GPT-5.6 Luna, Qwen3-235B and Grok 4.20, in four designs (Figure~\ref{fig:overview}b), each compared with a benchmark computed from the game. We use standard equilibrium and finite-population tools, not a new solution concept. We call near-flat  contribution across risk levels under a fixed prompt a \emph{default}; the term describes behaviour, not internal reasoning.

We organise our findings around three dimensions. (1)~\emph{Stakes.} Most models contribute roughly the fair share at every positive risk level, even when there is no risk. Removing normative wording does not change this pattern. Stating that only the agent’s own payoff matters reduces most zero-risk contributions, but only GPT-5.6 Luna shifts from contributing little at low risk to contributing the fair share at high risk. Asked in separate calls, several models rank the two choices correctly and change that ranking with the risk, but they keep contributing  the same amount (Section~\ref{sec:risk}). (2)~\emph{Partners.} Beside scripted partners who make every  contribution a loss, no model plays the best response of contributing  nothing, and two keep contributing  after the outcome is already settled. In a mixed game (also referred to as a table or group), a single partner that stops contributing in the last round often causes a table of exact fair-share players to fail, while a table that overpays can absorb the shortfall. (Section~\ref{sec:partners}). (3)~\emph{Selection.} Population size changes who is selected under payoff-based imitation: a population of one table favours the smallest  contributor  and lowers group success, whereas a larger randomly mixed population favours the exact fair-share player at moderate and high risk (Section~\ref{sec:selection}).

\section{Related work}
\label{sec:related}

\paragraph{Collective risk and selection.}
The collective-risk dilemma was introduced to study why groups fail to avoid a shared disaster, and human groups reach the target mostly when the risk is high~\cite{milinski2008collective}; inequality and communication change how often they succeed~\cite{tavoni2011inequality}. Threshold public goods games have many equilibria~\cite{palfrey1984participation}, contributions decline over the rounds of repeated public goods games~\cite{andreoni1988free}, and the risk of collective failure can turn a defection trap into a coordination problem~\cite{santos2011risk,merhej2021cooperation}. In evolutionary simulations of the same game, constant fair sharers lose out to players who hold back early and compensate later~\cite{abouchakra2012evolutionary}. Selection on payoff relative to a small group favours whoever does better than its neighbours~\cite{schaffer1988evolutionarily}, while selection between groups can favour cooperation~\cite{traulsen2006evolution}. Finite-population analyses price the incentives that sustain cooperation in this game~\cite{duong2026cost} and show that maximising cooperation need not maximise welfare~\cite{han2024evolutionary}. This literature predicts that behaviour should vary with risk, so the absence of such variation is informative; we use  a similar game theoretical approach to analyse how different agent types perform when placed together in mixed groups/tables.

\paragraph{Language models as strategic players.}
Language models have been studied as economic subjects~\cite{horton2023large,chen2023emergence,mei2024turing} and in repeated and multi-player  games~\cite{fan2024can,akata2025playing,fontana2024nicer,pal2026large,kempinski2025game,pham2026humans}. Models tend to be more cooperative than people~\cite{mei2024turing}, and an even split can reflect a focal point rather than fairness~\cite{dodivers2025uncovering}. The weight of framing relative to game structure differs by model~\cite{lore2024strategic}, and models can hold a correct belief or plan and act against it~\cite{fan2024can,schmied2026llms}. FAIRGAME runs language-model agents through standard games and finds outcomes that depend on model, language and persona~\cite{buscemi2025fairgame}; later work extends it to a payoff-scaled prisoner's dilemma and a multi-player public goods game~\cite{huynh2025understanding}.

\paragraph{Language models in social dilemmas.}
Societies of agents over-use a common resource~\cite{piatti2024cooperate}, contributions to a public good can stay the same even when it returns nothing~\cite{huang2025competing}, and reasoning models free-ride~\cite{guzmanpiedrahita2025corrupted}. Agents stand in for people in a collective-risk game~\cite{slumbers2025using} and for developers, regulators and users in an evolutionary model of AI regulation~\cite{buscemi2025llms}. A study of 35 open models includes a collective-risk game at one risk level and observes their decisions far from the group optimum~\cite{kumar2026cooperative}. In a binary collective-risk game with strategies written by three models, payoff-biased imitation drives larger groups to selfish, low-welfare outcomes~\cite{willis2026evaluating}, and prompts that evolve under individual selection collapse into defection while group selection keeps them prosocial~\cite{celiktemel2026group}. A recent study audits rule recall, state tracking and payoff calculation before interpreting play in multi-player races to develop AI~\cite{pham2026humans}. It compares play with an evolutionary benchmark and human data, and finds that correct recall can coexist with weak expected-payoff calculation. We differ in three ways: we vary the risk from zero to certainty and add two tests in which contributing  cannot raise the  contributor's cash; we compute the exact best response to fixed partners; and we select among deployed models with payoffs from their real mixed tables, finding that at moderate and high risk the favoured model changes once the population exceeds one table.

\paragraph{Multiagent evaluation.}
Multiagent reinforcement learning studies cooperation in sequential social dilemmas~\cite{leibo2017multi,hughes2018inequity}, and ad hoc teamwork and cooperative AI ask for agents that work with partners unlike themselves~\cite{stone2010ad,dafoe2020open,leibo2021scalable}. Empirical game-theoretic analysis treats trained agents as strategies of a meta-game~\cite{wellman2006methods,tuyls2018generalised}, $\alpha$-Rank ranks them by evolutionary dynamics~\cite{omidshafiei2019alpha}, and evolutionary game theory is used to study emergent behaviour in multiagent systems~\cite{han2022emergent}. Language models show distinct strategic fingerprints in tournaments with classic strategies~\cite{payne2025strategic}, their cooperation evolves differently across base models~\cite{vallinder2025cultural}, and cooperation sustained by repetition weakens when co-players differ~\cite{tewolde2026coopeval}. In the collective-risk game, people who delegate to artificial agents succeed more often~\cite{fernandezdomingos2022delegation}, and in theory adaptive players compensate for the effort of artificial agents~\cite{terrucha2024art}. We add a multiplayer dilemma with exact best responses to fixed partners, and measure this compensation in real models.

\section{Game and evaluation design}
\label{sec:game}

\subsection{Collective-risk game and benchmarks}
\label{sec:rules}
Six players each start with $e=40$ units, and the game lasts $R=10$ rounds. In each round every player privately chooses a contribution $a_i^t\in\{0,2,4\}$ to a shared \emph{pool}, and after each round all players see what every seat has contributed. Let $x_i=\sum_t a_i^t$ be a player's total and $X=\sum_i x_i$ the pool. The target is $T=120$. A player keeps what it did not contribute, but if the pool misses the target, a catastrophe destroys all savings with probability $p$ (Table~\ref{tab:terms} lists the terms used):
\begin{equation}
u_i \;=\;
\begin{cases}
e-x_i & \text{if } X\ge T,\\
0 \text{ with probability } p, \text{ else } e-x_i & \text{if } X<T.
\end{cases}
\label{eq:payoff}
\end{equation}

\begin{table}[t]
\caption{Symbols and terms.}
\label{tab:terms}
\small
\setlength{\tabcolsep}{3pt}
\begin{tabular}{@{}>{\raggedright\arraybackslash}p{0.25\columnwidth}>{\raggedright\arraybackslash}p{0.72\columnwidth}@{}}
\toprule
Term & Meaning \\
\midrule
Seat; table & One player position; the six seats of one game \\
Fair share & The target split equally: 20 per seat, or 2 per round \\
Exact split & Contributing  2 in every one of the ten rounds \\
Pivot $\popt=\tfrac12$ & Risk above which reaching the target is optimal \\
$\mathrm{opt}(p)$ & Welfare-optimal expected payoff per seat \\
Dominated & Lowers the seat's own payoff whatever others do \\
Settled & Further contributions cannot affect the outcome: the pool has already reached $T$ or can no longer do so. \\
\bottomrule
\end{tabular}
\end{table}

\begin{proposition}
\label{prop:eq}
Consider the expected-payoff formulation of the game in which each player picks a total $x_i\in\{0,2,\dots,40\}$, and let $p<1$. A profile is a Nash equilibrium if and only if $x=0$, or $X=T$ and $x_i\le p\times e$ for every $i$. A target-reaching equilibrium exists if and only if $p\ge\popt=T/(6e)=\tfrac12$. The welfare-optimal outcome is $x=0$ for $p<\popt$ and $X=T$ for $p>\popt$, and both are optimal at $p=\popt$, so the welfare-optimal mean payoff per player is $\mathrm{opt}(p)=\max\{(1-p)e,\;e-T/6\}$; this welfare statement also holds at $p=1$.
\end{proposition}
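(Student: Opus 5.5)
Write the expected payoff of player $i$ as $U_i(x)=e-x_i$ if $X\ge T$ and $(1-p)(e-x_i)$ if $X<T$. The plan is a case analysis of unilateral deviations, organised by whether the pool falls short of, exactly meets, or exceeds $T$. Throughout I use that totals are even integers in $[0,e]$, so any contribution can be lowered in steps of $2$.

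\emph{Nash characterisation.} The argument runs as follows.
\begin{itemize}
\item \emph{Overshoot, $X>T$.} Since $X$ and $T$ are even, $X\ge T+2$. Any player with $x_i>0$ can lower $x_i$ by $2$, still reach the target, and gain $2$. So there is no equilibrium with $X>T$; the case $x=0$ does not arise here because then $X=0<T$.
\item \emph{Shortfall, $X<T$.} Any player with $x_i>0$ can drop to $0$. The pool still misses, and the payoff rises from $(1-p)(e-x_i)$ to $(1-p)e$, a strict gain because $p<1$. So equilibrium forces $x=0$.
\item \emph{The profile $x=0$ is always an equilibrium.} A single deviator can reach at most $e=40<120$, so it cannot reach $T$ and only loses $(1-p)x_i$.
\item \emph{Exact hit, $X=T$.} Lowering $x_i$ by any amount makes the pool miss. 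The best such deviation is to $0$, giving $(1-p)e$. Raising $x_i$ only loses. So the profile is an equilibrium iff $e-x_i\ge(1-p)e$ for every $i$, i.e.\ $x_i\le pe$.
\end{itemize}

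\emph{Existence threshold.} A profile with $X=T$ and every $x_i\le pe$ exists iff $6pe\ge T$, once divisibility is handled. If $p\ge 1/2$, the fair share $x_i=20=T/6$ satisfies $20\le pe$ because $pe\ge 20$, so it is a target-reaching equilibrium. Conversely, if an equilibrium has $X=T$, then summing $x_i\le pe$ gives $T\le 6pe$, so $p\ge T/(6e)=1/2$. The evenness constraint never binds, since the construction uses the even value $20$.

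\emph{Welfare.} Total expected welfare is $6e-X$ if $X\ge T$ and $(1-p)(6e-X)$ otherwise.
\begin{itemize}
\item Among missing profiles, welfare is maximised at $X=0$, giving $6(1-p)e$.
\item Among reaching profiles, welfare is maximised at $X=T$, giving $6e-T$.
\item Comparing the two: $(1-p)e\ge e-T/6$ iff $p\le T/(6e)=1/2$, with equality exactly at $p=1/2$.
\end{itemize}
This yields $\mathrm{opt}(p)=\max\{(1-p)e,\,e-T/6\}$. The computation never uses $p<1$. At $p=1$ the missing branch gives $0$ and the reaching branch gives $e-T/6=20$, so the formula still holds. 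The $p<1$ hypothesis was needed only for the strict gain in the shortfall case: at $p=1$, any missing profile is an equilibrium, which is why the equilibrium statement excludes $p=1$.

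The main point needing care is the shortfall case. One must confirm that no deviation can reach the target when the remaining players contribute $X-x_i$, but that does not matter: the argument only uses the deviation to $0$, which keeps the pool below $T$ and is strictly profitable. Every equilibrium with $X<T$ must therefore be $x=0$, and $x=0$ itself survives because a single deviator can add at most $40<120$.
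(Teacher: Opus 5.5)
Your proof is correct and follows the paper's sketch. It uses the same three-way split into $X<T$, $X>T$ (evenness gives $X\ge T+2$) and $X=T$ (the best deviation is to $0$, giving $x_i\le pe$), sums these bounds for the threshold with the fair share as the converse construction, and compares the welfare branches $6e-X$ and $(1-p)(6e-X)$, while also checking explicitly that $x=0$ is always an equilibrium, which the sketch leaves implicit. One side remark is wrong but does not affect the proof: at $p=1$ not every missing profile is an equilibrium, because a seat with $X-x_i>T-e=80$ can raise its contribution to complete the target at a positive payoff (e.g.\ $X=118$, $x_i=18$: raising to $20$ earns $20>0$), so only missing profiles with $X-x_i\le 80$ for every $i$ survive.
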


\noindent\emph{Proof sketch.} If $X<T$, any player with $x_i>0$ gains by contributing  nothing, so the only equilibrium that misses the target is $x=0$. If $X>T$, then $X\ge T+2$ because totals are even, so a player can contribute 2 less and still reach the target. If $X=T$, the best deviation is to contribute nothing, worth $(1-p)e$ against $e-x_i$, which gives $x_i\le pe$; summing over players gives $p\ge T/(6e)$, and conversely $x_i=20$ for every $i$ is an equilibrium when $p\ge\tfrac12$. Welfare is $6e-X$ when the target is met and $(1-p)(6e-X)$ otherwise, so it peaks at $X=T$ or $X=0$. The ten-round game, in which strategies can react to the history, has the same totals in pure-strategy Nash equilibrium (see Supplementary Material, Proposition~S1, Section~S2.3, for the full proof). We claim nothing about subgame perfection, and $\mathrm{opt}(p)$ is a welfare bound that does not depend on the equilibrium concept.\hfill$\square$

Proposition~\ref{prop:eq} gives a \emph{symmetric welfare benchmark}: contribute nothing below the pivot and, above it, contribute the fair share needed to reach the threshold. Asymmetric allocations that reach exactly $T$ achieve the same group welfare. This benchmark is neither a unique equilibrium nor a prediction of individual play: contributing nothing is also an equilibrium at every risk level. Moreover, contributing a positive amount is strictly dominated in own expected payoff when $p=0$ and, for $p<1$, once the outcome is settled. These comparisons require neither risk neutrality nor beliefs about partners. %

\begin{proposition}
\label{prop:table}
For two seats $i$ and $j$ in the same game, the difference in their expected payoffs over the lottery is $c\,(x_j-x_i)$, with $c=1$ if the target was reached and $c=1-p$ otherwise.
\end{proposition}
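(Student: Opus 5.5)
The plan is to condition on whether the realised pool $X$ reached the target and then compute each seat's expected payoff directly from Eq.~\eqref{eq:payoff}. The key observation is that, once the contributions $x_1,\dots,x_6$ of the table are fixed, the only remaining randomness is the catastrophe lottery. This lottery is a single group-level event: it either destroys every seat's savings or none. So I take the expectation over that one Bernoulli$(p)$ draw, holding the contribution profile fixed. That is the sense of ``over the lottery'' in the statement.

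\emph{Case $X\ge T$.} No lottery is drawn, and by Eq.~\eqref{eq:payoff} we have $u_i=e-x_i$ and $u_j=e-x_j$ deterministically. The difference is therefore
\[
u_i-u_j=(e-x_i)-(e-x_j)=x_j-x_i,
\]
which is the claim with $c=1$.

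\emph{Case $X<T$.} With probability $p$ the catastrophe occurs and both seats receive $0$. With probability $1-p$ they keep $e-x_i$ and $e-x_j$. Hence $\mathbb{E}[u_i]=(1-p)(e-x_i)$, and likewise $\mathbb{E}[u_j]=(1-p)(e-x_j)$. Subtracting, the common endowment $e$ cancels and we get
\[
\mathbb{E}[u_i]-\mathbb{E}[u_j]=(1-p)(x_j-x_i),
\]
which is the claim with $c=1-p$. The result holds for every $p\in[0,1]$; when $p=1$ both expectations are $0$ and the difference vanishes, consistent with $c=0$.

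The only step that needs care is justifying that the same factor $c$ multiplies both seats' payoffs. This relies on the catastrophe being shared: a single draw applies to the whole table, not an independent draw per seat. Because of this, the case split is on a table-level quantity ($X$ versus $T$) and the endowment term cancels. I would state this explicitly from the model description: ``destroys all savings'' refers to one event for the group. With independent per-seat draws the expected difference would still be $(1-p)(x_j-x_i)$, but the realised difference would not have this form. Everything else is routine arithmetic.
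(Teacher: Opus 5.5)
Your proof is correct and follows the paper's one-line argument: both seats face the same pool and the same lottery, so only their own contributions differ, and your case split on $X\ge T$ versus $X<T$ spells that out. As your own closing remark shows, the expected-payoff claim needs only the common pool and the common $p$ (by linearity of expectation), so the shared single draw matters for the realised difference but not for the expected one.
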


\noindent Both seats face the same pool and the same lottery, so only their own contributions differ. Hence, for $p<1$, whoever contributes  less at a table earns more, whatever the group achieves.

\subsection{Models, protocol and measures}
\label{sec:setup}

We test the five models of Section~\ref{sec:intro}: one production model per provider, chosen to reflect the kind of model plausibly used for deploying many agents at scale. Qwen3-235B is the instruction-tuned build without a thinking mode, and Grok 4.20 is its non-reasoning variant.

\paragraph{Prompts.}
In each round a seat receives a prompt with the rules, the full history of every seat's contributions, and a request for making a contribution; the Supplementary Material provides every prompt word for word (Section~S4). The base prompt calls the pool a ``climate account'', says that the group ``must'' reach the target, calls the loss a ``disaster'' and names the game as a collective-risk social dilemma. It never states the player's goal, and it hints at the equal split twice. The prompt controls in Section~\ref{sec:designs} change exactly these parts.

\paragraph{Protocol.}
The running pool is not shown,  so any seat that needs it must compute it from the history. The prompt is in English and no seat is assigned a persona. We parse the move from the reply's final \texttt{CONTRIBUTION:} line.

\paragraph{Measures.}
We report \emph{expected} payoffs, the average a seat would earn if the catastrophe lottery were run many times on the same contribution, and the share of $\mathrm{opt}(p)$ that is lost. Proposition~\ref{prop:eq}, $\mathrm{opt}(p)$ and the selection fitness below all use expected payoffs, which remove the lottery noise without bias; agents never see the draw, and the Supplementary Material lists the realised draws (Table~S6). The unit of analysis is the game, since seats in one game share a history and a lottery. Intervals are 95\% percentile bootstraps over games and tests are permutation tests over games;  Supplementary Material gives every value behind the statements below (Sections~S6--S12).

\subsection{Evaluation designs}
\label{sec:designs}

Table~\ref{tab:designs} summarises the experimental designs, elaborated below; each model plays the same number of games in every design.

\begin{table}[t]
\caption{Designs. A cell is one condition at a risk level: one model in self-play, one model with a scripted partner, or one pair of models with a split of seats. Each cell holds ten games.}
\label{tab:designs}
\small
\setlength{\tabcolsep}{3pt}
\begin{tabular}{llr}
\toprule
Design & Risk levels $p$ & Games \\
\midrule
Self-play grid & $0, 0.1, \dots, 1$ & \CntGamesBaseline \\
Rerun, with questions & $0.1, 0.5, 0.9$ & \CntGamesProbe \\
No hint & $0.1, 0.5, 0.9$ & \CntGamesNoCue \\
Neutral wording & $0, 0.1, 0.9$ & \CntGamesWording \\
Neutral wording, goal stated & $0, 0.1, 0.9$ & \CntGamesNeutral \\
Two rewordings & $0.1, 0.9$ & \CntGamesPara \\
Temperature zero & $0.1, 0.9$ & \CntGamesTempZero \\
One model, five scripts (4 kinds) & $0.1, 0.3, \dots, 0.9$ & \CntGamesScripted \\
Two models (10 pairs, 1 to 5 seats) & $0.1, 0.5, 0.9$ & \CntGamesMixed \\
\midrule
All & & \CntGames \\
\bottomrule
\end{tabular}
\end{table}

\paragraph{Self-play and prompt controls.}
Six copies of one model play the base prompt at 11 risk levels  $p=0, 0.1, \dots,1$, and again under prompt and decoding controls at fewer risk levels. \emph{No hint} deletes the two equal-split phrases. \emph{Neutral wording} replaces the normative and naming phrases with plain ones, such as ``a shared account''. \emph{Goal stated} keeps the neutral wording and adds an opening  saying the player is in a paid experiment where ``the only thing that matters to you is your own final cash payoff''. Two \emph{rewordings} change the phrasing but keep every rule and the hint, one condition sets the temperature to zero, and a \emph{rerun} repeats the base prompt with value questions.

\paragraph{Value questions.}
In the rerun, separate calls whose answers never enter the game ask the first seat at the start, middle and end of the game which of two whole-game strategies gives the higher average payoff: contributing 2 in every round and reaching the target, or contributing nothing and missing it. The correct answer switches at $\popt$ (Supplementary Material, Section~S8).

\paragraph{Fixed partners.}
One model plays with five scripted partners of the same type. They always contribute 0, always contribute 2, always contribute 4, or copy the group, which means contributing  2 in the first round and then the allowed amount nearest to the average of what the other seats contributed in the previous round. The model is not informed that its partners are scripted, but their moves are included in the history. We compute the \emph{best response} to each type, i.e. the play that earns the most, by enumerating every possible sequence of the model’s own ten moves. With partners who always contribute 0, the target cannot be reached; with partners who always contribute 4, the target is reached without the model. The best response to both is to contribute nothing, and once the history settles the outcome, contributing cannot improve the outcome and only reduces payoff. Against fair-share partners, the best response is to contribute nothing below $\popt$ and exactly the fair share above it; partners who copy the group never deviate from 2, thus resulting in the same best response.

\paragraph{Mixed tables.}
To mix defaults at a table, we seat every pair of models together, with one to five seats for the first model and the rest for the second, at low, moderate and high risk. Each model's seats form one block, fixed within a cell and alternating across cells, so every model occupies every seat equally often  overall.

\paragraph{Selection among models.}
We finally examine which model prevails under payoff-based retention or imitation, and how this outcome depends on population size. We treat each LLM as a strategy in an empirical game, i.e. using experimentally estimated payoffs from model interactions to define the game~\cite{wellman2006methods,tuyls2018generalised}. We write $\pi_A(k)$ for the mean expected payoff of an $A$ seat at a table with $k$ seats of $A$ and $6-k$ of $B$, from mixed tables for $1\le k\le5$ and from self-play (grid plus rerun) for $k=6$. A population of $N$ agents consists of $i$ agents of type $A$ and $N-i$ of type $B$, and each agent's five tablemates are drawn at random from the other $N-1$, as in e.g. ~\cite{santos2011risk,han2017evolution,santos2020picky}. The fitness of each type is its payoff averaged over the tables/interactions it participates in:
\begin{align}
f_A(i) &= \textstyle\sum_{k=1}^{6} \mathcal{H}(k-1;\,N-1,\,i-1)\;\pi_A(k), \label{eq:fitness}\\
f_B(i) &= \textstyle\sum_{k=0}^{5} \mathcal{H}(k;\,N-1,\,i)\;\pi_B(6-k), \nonumber
\end{align}
where $\mathcal{H}(k;\,N-1,\,m)$ is the hypergeometric chance that $k$ of five tablemates come from the $m$ other agents of type $A$. Under Fermi imitation~\cite{traulsen2006stochastic,han2017evolution}, a randomly selected agent copies a random other agent's strategy with probability $1/(1+e^{-\beta\Delta f})$, where $\Delta f$ is the other's fitness minus its own and $\beta=1$ per payoff unit. We assume rare mutation~\cite{fudenberg2006imitation,han2017evolution}, which has been shown to provide a strong approximation beyond the strict rare-mutation regime~\cite{han2012emergence,rand2013evolution,zisis2015generosity,sigmund2010social}: each newcomer either fixes or disappears before the next arrives. Pairwise fixation probabilities then define a Markov chain over the five homogeneous model populations.  Its \emph{stationary mass} gives each model's long-run weight, and EGTtools~\cite{fernandezdomingos2023egttools} was used to reproduce the fixation probabilities and stationary distributions of Figure~\ref{fig:selection}. With $N=6$ the whole population is one table; with $N=30$, fitness averages over random, overlapping tables. Changing $N$ changes both table compositions and fixation dynamics, not only whom an agent is compared with. We evaluate this finite-population process, not a scoring-only intervention.

\begin{figure*}[t]
\centering
\includegraphics[width=\textwidth]{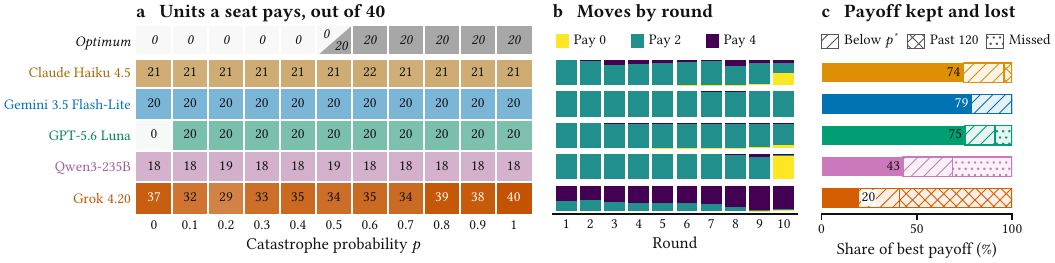}
\Description{Three panels that share one row per model, below a reference row. Left: a grid of the mean units a seat contributes  at each catastrophe probability from 0 to 1. The optimum row switches from 0 to 20 at one half, while the model rows barely change: about 20 for four models, 0 only for GPT-5.6 Luna at probability zero, and 29 to 40 for Grok 4.20. Middle: for each round, the share of seats contributing  0, 2 or 4; Qwen3-235B contributes  2 until round 10 and then almost always 0, and Grok 4.20 mostly contributes  4. Right: bars of the share of the best attainable payoff each model keeps; Grok 4.20 keeps about a fifth and loses most by contributing  past the target, and Qwen3-235B loses most by missing it.}
\caption{Self-play on the risk grid, 110 games per model. (a)~Mean units a seat contributes  out of 40, ten games per cell; the top row is the optimum of Proposition~\ref{prop:eq}, where both 0 and 20 are optimal at $p=0.5$. (b)~Share of seats contributing  0, 2 or 4 in each round, over the games with $p>0$. (c)~Expected payoff kept as a share of $\mathrm{opt}(p)$, and the loss split into contributing  toward the target below $\popt$, contributing  past 120, and missing the target at or above $\popt$.}
\label{fig:selfplay}
\end{figure*}

\section{Results}
\label{sec:results}
\subsection{Stakes:  contribution does not scale with risk}
\label{sec:risk}

\subsubsection{Self-play across risk}
\label{sec:grid}

Four of the five models contribute approximately  the fair share at every positive risk level (Figure~\ref{fig:selfplay}a). Claude Haiku 4.5, Gemini 3.5 Flash-Lite, GPT-5.6 Luna and Qwen3-235B contribute almost the same amount below the pivot as above it (a step of at most \GridStepMaxOther{} units), whereas the symmetric welfare benchmark jumps from contributing nothing to the fair share. In the original experiment with the same rules, human groups reached the target in half of their games at $p = 0.9$ and in none at $p = 0.1$ ~\cite{milinski2008collective}. Strong risk aversion could explain a flat contribution profile above zero risk: because catastrophe leaves players with nothing, a player with relative risk aversion near one may prefer a certain fair-share contribution even at low risk. Gemini 3.5 Flash-Lite provides the clearest example. Almost every seat plays the exact split at every risk level (Table~\ref{tab:models}), so this near-fixed rule attains the welfare optimum above the pivot but not below it.
\paragraph{Paying when there is nothing to protect.}
Four of the five models also contribute at $p=0$, about as much as at any other risk level (Table~\ref{tab:models}, first column). Contributing  there only lowers a player's own payoff and helps nobody, so neither risk attitude nor concern for others'  payoff can explain this contribution. GPT-5.6 Luna is the exception. It contributes  nothing at $p=0$ and its usual share at every positive risk, so it distinguishes zero risk from positive risk but does not respond to how high the risk is.

\paragraph{Three ways to lose.}
The models lose payoff in three different ways (Table~\ref{tab:models}, Figure~\ref{fig:selfplay}c). Gemini 3.5 Flash-Lite, Claude Haiku 4.5 and GPT-5.6 Luna lose mostly at low risk, where they contribute for a target that is not worth reaching. Grok 4.20 overpays: its groups reach the target well before the last round, and its seats keep contributing  into a pool that is already safe. Qwen3-235B fails at the end. It contributes  2 round after round, then almost all of its seats contribute nothing in the last round (Figure~\ref{fig:selfplay}b). Its groups therefore almost never reach the target, although every one of them would have succeeded had each seat simply repeated its previous move. Success rates hide all of this. Gemini 3.5 Flash-Lite and Grok 4.20 both reach the target in every game, yet Grok 4.20 keeps far less of its endowment . For $0.5<p<1$, Qwen3-235B and Grok 4.20 groups earn less in most games than in the worst equilibrium of Proposition~\ref{prop:eq}, in which nobody contributes. In the last round, Qwen3-235B skips as often when its contribution is needed as when the others' fair shares already suffice, so its skip follows the round number rather than the pool; human contributions similarly decline over the rounds of repeated public goods games~\cite{andreoni1988free}. GPT-5.6 Luna skips mostly when its contribution is not needed, and Claude Haiku 4.5 stops mostly once the target is reached (Supplementary Material, Section~S12.1).

\begin{table}[t]
\vspace{-0.1cm}
\caption{Self-play over the risk grid. \emph{Pays at $p{=}0$}: mean seat total where contributing  is dominated. \emph{Exact split}: seats contributing  2 in all rounds. \emph{Target}: games reaching it. \emph{Payoff}: expected payoff per seat; the optimum averages \GridOptMean{}. \emph{Lost}: share of $\popt$ lost. \emph{Main loss}: contributing  below $\popt$, contributing  past 120, or missing the target at or above $\popt$ in last round.}
\label{tab:models}
\small
\input{tables/tab_models.tex}
\end{table}

\subsubsection{Prompt and objective controls}
\label{sec:controls}

\paragraph{The default effect persists under rewording, but the resulting failures do not.}
Deleting the two equal-split phrases does not reduce any model’s contribution (Table~\ref{tab:robust}), indicating that the hint was not the source of contributions. Instead, it bounded contributions for some models while enabling others contribute exactly 2 per round. Gemini 3.5 Flash-Lite still plays the exact split, which it can rebuild from the target, while GPT-5.6 Luna keeps its average  contribution but plays the exact split less often. Qwen3-235B and Grok 4.20 contribute more without the hint; Qwen3-235B now reaches the target, but it earns less because it overpays. The two rewordings and a temperature of zero leave the contributions of the four models other than Grok 4.20 almost unchanged. The failures are less stable than the averages. Under the second rewording Grok 4.20 contributes  about the fair share, and how often Qwen3-235B drops out in the last round varies widely across prompts. Grok 4.20's rise with risk on the grid recurs at temperature zero but not in the rerun or the second rewording, so we do not interpret it as a stable response.

\paragraph{Stating the goal, not removing norms, reduces  zero-risk contributions.}
The normative words are not the main source of the default. With neutral wording, most models still contribute at $p=0$. At low and high risk the four models other than Grok 4.20 contribute on average what they contributed before, while Grok 4.20 contributes  even more and GPT-5.6 Luna plays the exact split less often (Table~\ref{tab:robust}). Adding the stated goal is what changes play at zero risk: Claude Haiku 4.5 and Gemini 3.5 Flash-Lite then contribute almost nothing, Qwen3-235B contributes  far less than before, and Grok 4.20 barely changes. The goal still does not make most models weigh the risk. Claude Haiku 4.5 and Gemini 3.5 Flash-Lite now separate zero from positive risk but already contribute the fair share at low risk, a pattern that strong risk aversion could also produce. Only GPT-5.6 Luna changes from contributing  little at low risk to the fair share at high risk. Qwen3-235B contributes  slightly more at high than at low risk but less than before at every risk level, and never reaches the target; Grok 4.20 still overpays. 

For most models, behaviour at zero risk changes depending on whether the prompt clearly states the goal. When risk above zero, models tend to contribute  similar amounts across risk levels, and this pattern remains under different wordings. The partner and selection results below use the base prompt.

\begin{table}[t]
\caption{Mean seat total under prompt and decoding changes. Top: pooled over $p=0.1$ and $0.9$ (20 games each). \emph{Rerun} repeats the base prompt with in-game questions; \emph{Para.\,1, 2} are the rewordings; \emph{Wording} is the neutral wording and \emph{Goal} adds the stated goal. Bold: a shift from \emph{Base} of at least 2 units with permutation $P<0.01$; every bold shift also has $P<0.001$, below the Bonferroni threshold for the 35 comparisons. Bottom: single risk levels (10 games each).}
\label{tab:robust}
\small
\setlength{\tabcolsep}{4pt}
\input{tables/tab_robust.tex}
\end{table}

\paragraph{Correct answers, unchanged play.}
Asked the value question in separate calls, Claude Haiku 4.5, Gemini 3.5 Flash-Lite and GPT-5.6 Luna mostly answer correctly and change their answer with the risk, yet the questioned seats contribute almost the same at low and high risk; Qwen3-235B and Grok 4.20 mostly answer that keeping pays more. This is a gap between answers and play, not evidence about what an agent computes; the Supplementary Material gives every answer and the answers against play (Section~S8.6). A strongly risk-averse agent could give these answers and still contribute; contribution at $p=0$ and after settlement provides the sharper test.

\subsection{Partners: simple responses,  unstable play}
\label{sec:partners}

\subsubsection{Fixed partners}
\label{sec:scripted}

\paragraph{No model keeps all its endowment  when that is the best response.}
The model contributes nothing only when paired with partners who always contribute either 0 or 4. Of the \ScrGames{} scripted games, only \ScrOptimalGames{} are played exactly optimally, and in every case the optimum is the fair share. In other words, the models reach the optimum only when it coincides with the default contribution. Since an agent unaware that its partners are scripted may reasonably contribute early to learn their behaviour, the cleaner evidence comes from contributions made after the outcome has settled (Table~\ref{tab:settled}).

\paragraph{Partners matter, risk does not.}
We split the variation in what a model contributes  across games with  three fixed-amount partners  into partner, risk and interaction parts. For Claude Haiku 4.5, Gemini 3.5 Flash-Lite, GPT-5.6 Luna and Qwen3-235B, partner type  explains most of the variation (\ScrOpponentShareMin\% to \ScrOpponentShareMax\%), while risk and its interaction with partner type explain at most  \ScrRiskShareMax\%. For an exact best responder in the same games,  risk would explain half of  the variation. Beside fair-share partners, where the best response jumps from nothing to the fair share at $\popt$, no model’s total contribution changes by more than 2 between the low- and high-risk conditions. Grok 4.20's total follows neither pattern: it is driven largely by its opening move, which is its largest contribution in around half of its games.

\begin{table}[t]
\caption{Units contributed per game before and after the outcome is settled, alongside five partners who always contribute 0 or always contribute 4 (50 games per entry). Given these partners, every unit contributed is wasted; units contributed after settlement are wasted even for an agent unaware that its partners are scripted.}
\label{tab:settled}
\small
\input{tables/tab_settled.tex}
\end{table}

\paragraph{Responses to partners take simple forms.}
The models adjust their contributions in response to their partners, but not in the way a best responder would (Table~\ref{tab:settled}; Figures~S4 and~S5 of the Supplementary Material show every round). Gemini 3.5 Flash-Lite contributes  its usual amount, while both it and Claude Haiku 4.5 stop contributing once the history indicates that the outcome is settled. With partners who always contribute 4, GPT-5.6 Luna contributes little even when the outcome is settled, remaining close to a best response as it learns its partners’ behaviour. Qwen3-235B contributes more, not less, when interacting with partners who contribute nothing. Both Qwen3-235B and Grok 4.20 also continue making large contributions after the outcome is settled -- a behaviour that no belief about their partners can justify. Against five fair-share partners, Qwen3-235B contributes  2 for nine rounds and then nothing, so the table misses the target in almost every game by that single payment; Gemini 3.5 Flash-Lite in the same seat reaches it every time.

\subsubsection{Mixed tables}
\label{sec:groups}

\begin{figure}[t]
\centering
\includegraphics[width=\columnwidth]{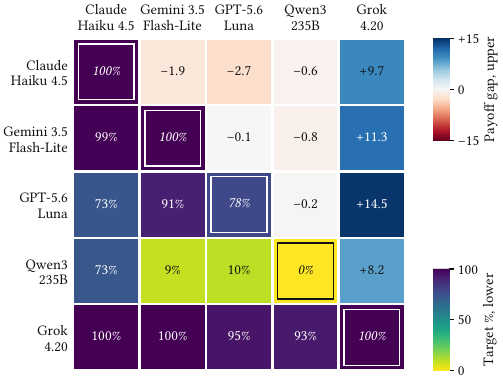}
\Description{One five-by-five matrix over pairs of models, split along the diagonal. Cells above the diagonal give the payoff of the row model's seats minus that of the column model's seats at the same table; the Grok 4.20 column is blue, meaning every other model earns more than Grok 4.20, and the other cells are pale, meaning small differences. Cells on and below the diagonal give the share of games that reach the target; the cells that pair Qwen3-235B with Gemini 3.5 Flash-Lite or GPT-5.6 Luna are bright yellow, the colour for a target rate near zero, and the Grok 4.20 row is dark.}
\caption{Two models play at each table, with 150 games per pair, pooled across seat assignments and risk levels. Above diagonal: expected payoff of  row model's seats minus  column model's, in the same games. On and below the diagonal: proportions of games reaching the target. Boxes mark self-play (60 games: grid games plus a rerun at  same risks).}
\label{fig:advantage}
\end{figure}

At a shared table every model earns more than Grok 4.20, while the payoff differences among the other four models are small (Figure~\ref{fig:advantage}, above the diagonal). This is consistent with   Proposition~\ref{prop:table}: Grok 4.20 contributes  the most, and at a two-model table the higher contributor earns less. Group success depends on the pair (Figure~\ref{fig:advantage}, below the diagonal). Tables that include Grok 4.20 reach the target in the great majority of games, while tables that mix Qwen3-235B with Gemini 3.5 Flash-Lite or GPT-5.6 Luna rarely do.

\begin{figure}[t]
\centering
\includegraphics[width=\columnwidth]{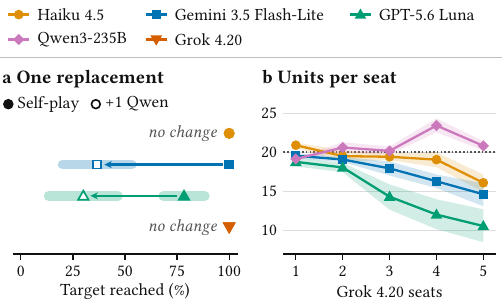}
\Description{A shared legend above both panels gives each model's colour and marker. Panel a draws, for each model, an arrow from self-play (filled marker) to one Qwen3-235B replacement (open marker) on a target-success axis. Gemini 3.5 Flash-Lite and GPT-5.6 Luna succeed less often with Qwen3-235B; Claude Haiku 4.5 and Grok 4.20 stay at full observed success and are marked no change. Panel b shows contributions for all five Grok 4.20 seat counts as one line per model with a shaded bootstrap band; Claude Haiku 4.5, Gemini 3.5 Flash-Lite and GPT-5.6 Luna contribute less as Grok 4.20 seats increase, while Qwen3-235B stays near or above the fair share. Translucent bars and bands are bootstrap intervals; the dotted line marks the fair share.}
\caption{Mixed groups, pooled over $p=0.1,0.5,0.9$. (a)~Self-play (filled; 60 games, grid plus rerun) versus one Qwen3-235B replacement (open; 30); arrows run from the first to the second. (b)~Mean contribution per focal seat beside one to five Grok 4.20 seats (30 games each); dotted: fair share. Bars and bands: 95\% game-bootstrap intervals, not stratified by risk. Lines join measured means.}
\label{fig:mixed}
\end{figure}

\paragraph{An exact fair-share target leaves no margin for error.}
One partner who stops contributing in the final round can cause a table of  fair-share players to fail (Figure~\ref{fig:mixed}a). When the pool is one round of fair shares short before the last round, Qwen3-235B contributes  nothing in \MixQwenPotZeroPct\% of cases and Gemini 3.5 Flash-Lite in \MixFlashPotZeroPct\%. Five Gemini 3.5 Flash-Lite seats with one Qwen3-235B seat succeed in only \MixFlashQwenReach{} of 30 games, although Gemini 3.5 Flash-Lite alone never misses in our sample; tables with GPT-5.6 Luna also succeed less often. Tables of Claude Haiku 4.5 or Grok 4.20, which contribute more, absorb the same partner. Replacing \emph{all} Qwen3-235B last-round zeros by 2, holding other recorded actions fixed, would rescue \MixQwenPivotalFix\% of the \MixMissWithQwen{} failed mixed tables containing it. This is an arithmetic counterfactual, not a tested intervention.  Conversely, one Grok 4.20 seat helps five Qwen3-235B seats succeed in most games, contributing  for the rescue. These comparisons depend on the base prompt and the fixed seat blocks.

\paragraph{Agents free-ride on generous partners.}
Most agents contribute less as the number of Grok 4.20 seats increases from one to five (Figure~\ref{fig:mixed}b). Alongside five Grok 4.20 seats,  GPT-5.6 Luna contributes  \MixLunaBesideFiveGrok{} units, and Gemini 3.5 Flash-Lite and Claude Haiku 4.5 also fall below the fair share; these two stop once the pool reaches the target, while GPT-5.6 Luna contributes  less from the start. Qwen3-235B does not show this decline. Within a cell, Grok 4.20's higher opening contributions are associated with lower total contributions by Gemini 3.5 Flash-Lite and Claude Haiku 4.5, mainly through earlier target completion. GPT-5.6 Luna also contributes  less while the pool is short, consistent with the compensation predicted alongside artificial agents~\cite{terrucha2024art}. Openings are not randomised and partners can react to a deviation, so this association identifies neither a causal response nor an exact best response.

\subsection{Selection: population-size dependent }
\label{sec:selection}

In a one-table population, selection favours the model that contributes least. Proposition~\ref{prop:table} explains why: within a game, contributing less leads to a higher individual payoff relative to the other players. In our data, Qwen3-235B contributes least relative to its tablemates, so selection gives it the greatest weight (Figure~\ref{fig:selection}a). But this individual advantage comes at a cost to the group: tables reach the target less often when models are selected this way than when a model is chosen uniformly at random. At $p=0.9$, for example, the target is reached in \SelWithinReachHigh\% of selected games, compared with \SelUniformReachHigh\% of randomly selected games. This illustrates the known result that selection based on payoffs relative to a small group favours whoever does better than its neighbours, even when this comes at a cost to the group~\cite{schaffer1988evolutionarily}. Here, we observe this dynamic in games played by deployed language models.

\begin{figure}[t]
\centering
\includegraphics[width=\columnwidth]{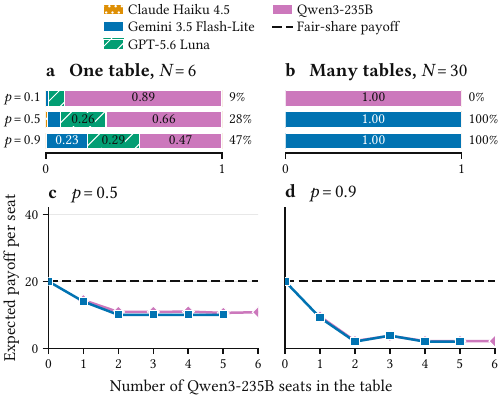}
\Description{Top: two sets of three stacked bars, one bar per catastrophe probability, split by the stationary mass of each model. In a population of one table, Qwen3-235B holds the largest part of every bar. In the larger population, Qwen3-235B fills the bar at low risk and Gemini 3.5 Flash-Lite fills the bars at moderate and high risk. Bottom: expected payoff of a Qwen3-235B seat and a Gemini 3.5 Flash-Lite seat as the number of Qwen3-235B seats changes in the same table, at moderate and high risk. The two lines show that the lower  contributor  can earn more than its tablemates even when the table misses the target; the dashed line is the fair-share payoff.}
\caption{Rare-mutation Fermi selection ($\beta=1$). (a,\,b)~Stationary mass in a population of one table ($N=6$) or a larger randomly mixed population ($N=30$); right of each bar, the self-play target rate weighted by mass. (c,\,d)~Expected payoff of a Qwen3-235B or Gemini 3.5 Flash-Lite seat by the number of Qwen3-235B seats at its table, at $p=0.5$ and $p=0.9$; dashed: fair-share payoff. Invasion diagrams are in the Supplementary Material (Figure~S7).}
\label{fig:selection}
\end{figure}

In  larger populations, selection favours Gemini 3.5 Flash-Lite at moderate and high risk (Figure~\ref{fig:selection}b). Its self-play attains the welfare optimum there, but neither optimality nor Nash equilibrium implies that imitation will select it. The ranking comes from the estimated payoffs at every composition along the fixation path, where a lower  contributor  can earn more than its tablemates even while lowering its own payoff (Figure~\ref{fig:selection}c,\,d). At low risk both population sizes favour Qwen3-235B, whose outcome remains far below the optimum. The switch to Gemini 3.5 Flash-Lite occurs already at seven or eight agents, consistent with finite-population theory~\cite{schaffer1988evolutionarily}. 

Each winner is stable under bootstrap resampling, stronger selection, larger $N$ and grid-only payoffs, but not under weak selection or when a candidate is removed (Supplementary Material, Sections~S11.4 and~S11.6).

\section{Conclusion and discussion}
\label{sec:discussion}

Our results have shown that cooperative-looking self-play is not sufficient evidence that an agent responds appropriately to incentives or adapts to its partners. Across the five models, contributions often remain close to the fair share as risk changes, including when contributing is payoff-dominated; some models also contribute after their actions can no longer affect the outcome. In mixed tables, one late shortfall can make an exact fair-share target fail. Selection outcomes depend on population structure: in a one-table population, selection can favour the lowest contributor at the group’s expense, while larger randomly mixed populations favour the fair-share model at moderate and high risk.

These findings matter for how we understand machine behaviour. A cooperative action need not reflect a stable preference for cooperation or an understanding of its consequences: it may instead be a default response to the prompt or a learned focal-point strategy. This empirical approach to auditing agent behaviour—treating models as subjects for observation in varied, interactive environments—aligns with the call to systematically study machine behaviour~\cite{rahwan2019machine}. Conversely, correctly identifying the payoff-maximising action when asked does not ensure that a model will choose it in play. We therefore distinguish what models can state from how they behave, and caution against treating self-play success as evidence of strategic competence, social preference, or reliable adaptation. Evaluations should vary incentives and partner behaviour, test mixed populations, and compare individual payoffs with group welfare. They should also report the welfare benchmark, identify dominated actions, and measure contributions after the outcome is settled. This can reveal wasted resources and cases where apparent cooperation does not improve welfare~\cite{han2024evolutionary}. 

The mechanisms behind these behaviours remain uncertain: our data show choices, not reasoning. One possibility is that an equal split acts as a focal point that models reconstruct even without an explicit hint~\cite{dodivers2025uncovering}. Another is that questions prompt calculation while requests to act elicit a behavioural default~\cite{fan2024can,schmied2026llms}. Post-training may also encourage models to contribute their share unless instructed otherwise, consistent with findings that models are more cooperative than people~\cite{mei2024turing} and that cooperation increases with the stakes in repeated prisoner’s dilemmas, contrary to evolutionary predictions~\cite{huynh2026payoff}. Stronger reasoning is not necessarily a remedy: reasoning models can free-ride more in public-goods games~\cite{guzmanpiedrahita2025corrupted}. These accounts are hypotheses, not explanations established by our experiments.\\

\noindent\textbf{Limitations, scope, and reproducibility.}
The goal sentence is tested in only one phrasing and in self-play; without a group-payoff goal, we cannot distinguish the effect of making the objective explicit from the effect of the instruction’s direction. Because the running pool is not displayed, arithmetic errors may explain some contributions made after the group has already met its target. The value questions also omit $p=0$.

We test five models, one from each provider, in English and in a single game. Our claims therefore concern this panel and setting, not language models in general. Ten games per condition limit statistical power, and a zero-width bootstrap interval for a constant sample should not be interpreted as evidence of deterministic behaviour. Our prompt rewordings cover only a small portion of the prompt space; welfare losses assume risk neutrality. The scripted, mixed, and selection designs use the base prompt. Fixed seat blocks confound player composition with position, and the selection analysis assumes rare entry.

\balance
\bibliographystyle{ACM-Reference-Format}
\bibliography{refs}

\clearpage
\IfFileExists{supplement.pdf}{\includepdf[pages=-]{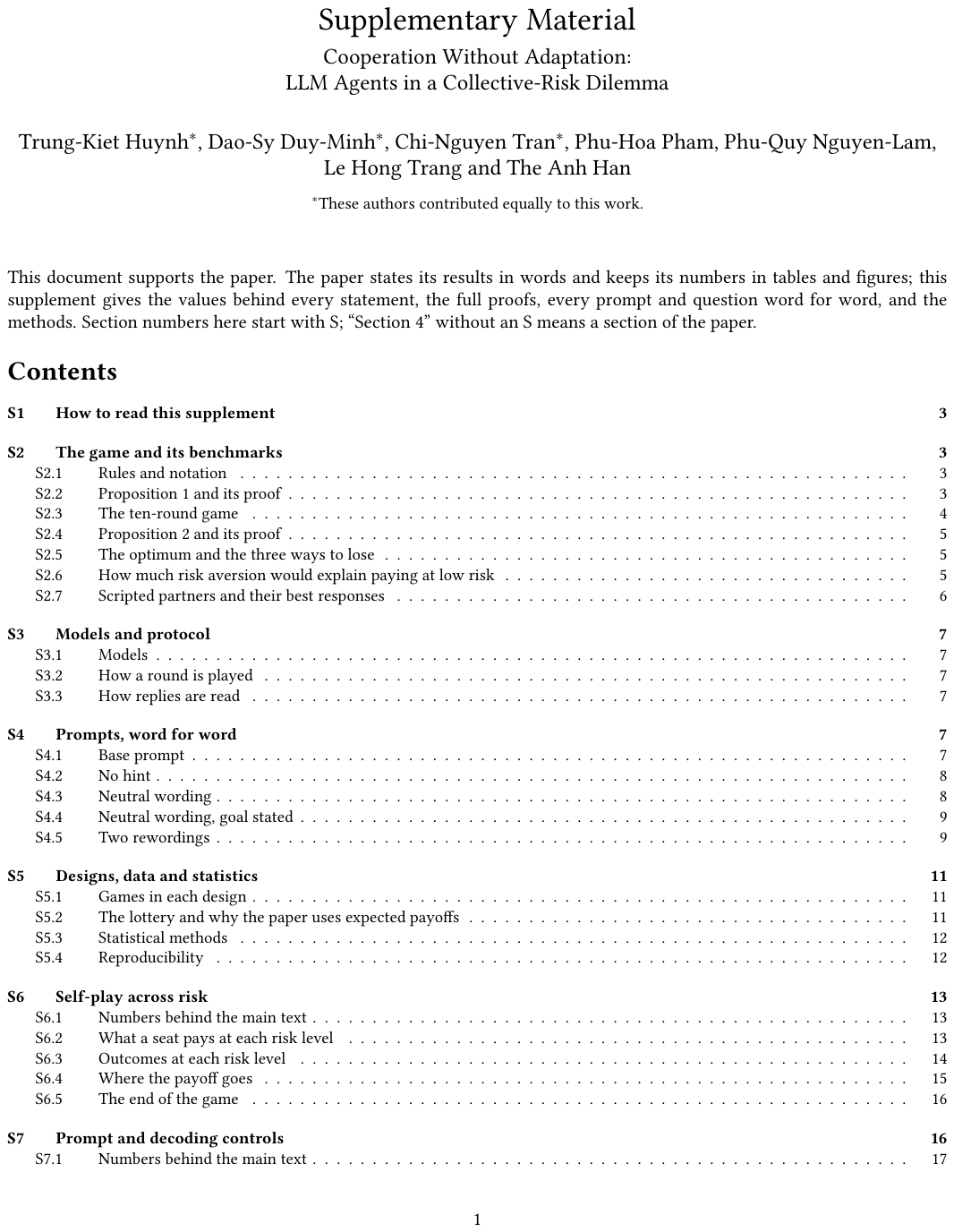}}{}

\end{document}

%% file: tables/num_selfplay.tex
\newcommand{\CntGames}{3{,}950}
\newcommand{\CntGamesBaseline}{550}
\newcommand{\CntGamesNoCue}{150}
\newcommand{\CntGamesProbe}{150}
\newcommand{\CntGamesScripted}{1{,}000}
\newcommand{\CntGamesMixed}{1{,}500}
\newcommand{\CntGamesPara}{200}
\newcommand{\CntGamesTempZero}{100}
\newcommand{\CntGamesNeutral}{150}
\newcommand{\CntGamesWording}{150}

\newcommand{\GridStepMaxOther}{0.2}

\newcommand{\GridOptMean}{25.5}

%% file: tables/num_scripted.tex
\newcommand{\ScrGames}{1{,}000}

\newcommand{\ScrOptimalGames}{186}

\newcommand{\ScrOpponentShareMin}{69}
\newcommand{\ScrOpponentShareMax}{92}
\newcommand{\ScrRiskShareMax}{5}

%% file: tables/num_mixed.tex
\newcommand{\MixFlashQwenReach}{11}

\newcommand{\MixQwenPotZeroPct}{92}
\newcommand{\MixFlashPotZeroPct}{0.1}
\newcommand{\MixMissWithQwen}{323}

\newcommand{\MixQwenPivotalFix}{92}
\newcommand{\MixLunaBesideFiveGrok}{10.5}

%% file: tables/num_selection.tex
\newcommand{\SelWithinReachHigh}{47}

\newcommand{\SelUniformReachHigh}{76}

%% file: tables/tab_models.tex
\begin{tabular}{@{}l@{\hspace{3pt}}r@{\hspace{3pt}}r@{\hspace{3pt}}r@{\hspace{3pt}}r@{\hspace{3pt}}r@{\hspace{3pt}}l@{}}
\toprule
 & Pays at & Exact & Target & Payoff & Lost & Main \\
Model & $p{=}0$ & split (\%) & (\%) & & (\%) & loss \\
\midrule
Claude Haiku 4.5 & 20.9 & 17 & 98 & 18.9 & 26 & below $p^*$ \\
Gemini 3.5 Flash-Lite & 20.0 & 99.7 & 100 & 20.0 & 21 & below $p^*$ \\
GPT-5.6 Luna & 0.0 & 72 & 70 & 19.2 & 25 & below $p^*$ \\
Qwen3-235B & 18.0 & 4 & 2 & 10.9 & 57 & missed \\
Grok 4.20 & 37.4 & 2 & 100 & 5.0 & 80 & past 120 \\
\bottomrule
\end{tabular}

%% file: tables/tab_robust.tex
\begin{tabular}{@{}lrrrrr@{}}
\toprule
 & Claude & Gemini 3.5 & GPT-5.6 & Qwen3 & Grok \\
Condition & Haiku 4.5 & Flash-Lite & Luna & 235B & 4.20 \\
\midrule
Base & 20.9 & 20.0 & 19.9 & 18.4 & 35.1 \\
Rerun & 21.3 & 20.0 & 19.9 & 18.5 & 35.6 \\
No hint & 21.7 & 20.0 & 19.9 & \textbf{34.7} & \textbf{40.0} \\
Para.\,1 & 21.3 & 20.0 & 19.9 & 19.4 & 31.8 \\
Para.\,2 & 20.9 & 20.0 & 19.8 & 18.0 & \textbf{20.6} \\
Temp.\,0 & 21.2 & 20.0 & 20.0 & 18.3 & 35.1 \\
Wording & 21.0 & 20.0 & 19.4 & 18.7 & \textbf{39.7} \\
Goal & 20.0 & 20.1 & \textbf{13.9} & \textbf{8.2} & 37.7 \\
\midrule
Base $p{=}0$ & 20.9 & 20.0 & 0.0 & 18.0 & 37.4 \\
Wording $p{=}0$ & 18.0 & 19.9 & 0.4 & 18.1 & 39.8 \\
Goal $p{=}0$ & 0.3 & 1.4 & 0.2 & 4.8 & 36.7 \\
Goal $p{=}0.1$ & 19.8 & 20.2 & 7.7 & 6.4 & 36.7 \\
Goal $p{=}0.9$ & 20.2 & 20.0 & 20.1 & 10.1 & 38.7 \\
\bottomrule
\end{tabular}

%% file: tables/tab_settled.tex
\begin{tabular}{@{}lrrrr@{}}
\toprule
 & \multicolumn{2}{c}{Always 0} & \multicolumn{2}{c}{Always 4} \\
\cmidrule(lr){2-3}\cmidrule(l){4-5}
Model & Before & After & Before & After \\
\midrule
Claude Haiku 4.5 & 7.7 & 0.1 & 14.2 & 1.1 \\
Gemini 3.5 Flash-Lite & 13.6 & 0.2 & 10.9 & 0.0 \\
GPT-5.6 Luna & 8.5 & 0.0 & 4.5 & 0.0 \\
Qwen3-235B & 17.8 & 12.1 & 12.3 & 8.1 \\
Grok 4.20 & 19.6 & 11.1 & 16.7 & 16.2 \\
\bottomrule
\end{tabular}

%% file: main.bbl

\begin{thebibliography}{61}


\ifx \showCODEN    \undefined \def \showCODEN     #1{\unskip}     \fi
\ifx \showISBNx    \undefined \def \showISBNx     #1{\unskip}     \fi
\ifx \showISBNxiii \undefined \def \showISBNxiii  #1{\unskip}     \fi
\ifx \showISSN     \undefined \def \showISSN      #1{\unskip}     \fi
\ifx \showLCCN     \undefined \def \showLCCN      #1{\unskip}     \fi
\ifx \shownote     \undefined \def \shownote      #1{#1}          \fi
\ifx \showarticletitle \undefined \def \showarticletitle #1{#1}   \fi
\ifx \showURL      \undefined \def \showURL       {\relax}        \fi
\providecommand\bibfield[2]{#2}
\providecommand\bibinfo[2]{#2}
\providecommand\natexlab[1]{#1}
\providecommand\showeprint[2][]{arXiv:#2}

\bibitem[Abou~Chakra and Traulsen(2012)]%
        {abouchakra2012evolutionary}
\bibfield{author}{\bibinfo{person}{Maria Abou~Chakra} {and}
  \bibinfo{person}{Arne Traulsen}.} \bibinfo{year}{2012}\natexlab{}.
\newblock \showarticletitle{Evolutionary dynamics of strategic behavior in a
  collective-risk dilemma}.
\newblock \bibinfo{journal}{\emph{PLoS Computational Biology}}
  \bibinfo{volume}{8}, \bibinfo{number}{8} (\bibinfo{year}{2012}),
  \bibinfo{pages}{e1002652}.
\newblock
\href{https://doi.org/10.1371/journal.pcbi.1002652}{doi:\nolinkurl{10.1371/journal.pcbi.1002652}}


\bibitem[Akata et~al\mbox{.}(2025)]%
        {akata2025playing}
\bibfield{author}{\bibinfo{person}{Elif Akata}, \bibinfo{person}{Lion Schulz},
  \bibinfo{person}{Julian Coda-Forno}, \bibinfo{person}{Seong~Joon Oh},
  \bibinfo{person}{Matthias Bethge}, {and} \bibinfo{person}{Eric Schulz}.}
  \bibinfo{year}{2025}\natexlab{}.
\newblock \showarticletitle{Playing repeated games with large language models}.
\newblock \bibinfo{journal}{\emph{Nature Human Behaviour}} \bibinfo{volume}{9},
  \bibinfo{number}{7} (\bibinfo{year}{2025}), \bibinfo{pages}{1380--1390}.
\newblock
\href{https://doi.org/10.1038/s41562-025-02172-y}{doi:\nolinkurl{10.1038/s41562-025-02172-y}}


\bibitem[Andreoni(1988)]%
        {andreoni1988free}
\bibfield{author}{\bibinfo{person}{James Andreoni}.}
  \bibinfo{year}{1988}\natexlab{}.
\newblock \showarticletitle{Why free ride? Strategies and learning in public
  goods experiments}.
\newblock \bibinfo{journal}{\emph{Journal of Public Economics}}
  \bibinfo{volume}{37}, \bibinfo{number}{3} (\bibinfo{year}{1988}),
  \bibinfo{pages}{291--304}.
\newblock
\href{https://doi.org/10.1016/0047-2727(88)90043-6}{doi:\nolinkurl{10.1016/0047-2727(88)90043-6}}


\bibitem[Buscemi et~al\mbox{.}(2025a)]%
        {buscemi2025llms}
\bibfield{author}{\bibinfo{person}{Alessio Buscemi}, \bibinfo{person}{Daniele
  Proverbio}, \bibinfo{person}{Paolo Bova}, \bibinfo{person}{Nataliya
  Balabanova}, \bibinfo{person}{Adeela Bashir}, \bibinfo{person}{Theodor
  Cimpeanu}, \bibinfo{person}{Henrique Correia~da Fonseca},
  \bibinfo{person}{Manh~Hong Duong}, \bibinfo{person}{Elias
  Fern{\'a}ndez~Domingos}, \bibinfo{person}{Antonio~M. Fernandes},
  \bibinfo{person}{Marcus Krellner}, \bibinfo{person}{Ndidi~Bianca Ogbo},
  \bibinfo{person}{Simon~T. Powers}, \bibinfo{person}{Fernando~P. Santos},
  \bibinfo{person}{Zia~Ush Shamszaman}, \bibinfo{person}{Zhao Song},
  \bibinfo{person}{Alessandro Di~Stefano}, {and} \bibinfo{person}{The~Anh
  Han}.} \bibinfo{year}{2025}\natexlab{a}.
\newblock \bibinfo{title}{Do {LLMs} trust {AI} regulation? Emerging behaviour
  of game-theoretic {LLM} agents}.
\newblock
\showeprint[arxiv]{2504.08640}
\href{https://doi.org/10.48550/arXiv.2504.08640}{doi:\nolinkurl{10.48550/arXiv.2504.08640}}


\bibitem[Buscemi et~al\mbox{.}(2025b)]%
        {buscemi2025fairgame}
\bibfield{author}{\bibinfo{person}{Alessio Buscemi}, \bibinfo{person}{Daniele
  Proverbio}, \bibinfo{person}{Alessandro Di~Stefano}, \bibinfo{person}{The~Anh
  Han}, \bibinfo{person}{German Castignani}, {and} \bibinfo{person}{Pietro
  Li{\`o}}.} \bibinfo{year}{2025}\natexlab{b}.
\newblock \showarticletitle{{FAIRGAME}: A Framework for {AI} Agents Bias
  Recognition Using Game Theory}. In \bibinfo{booktitle}{\emph{{ECAI} 2025 --
  28th European Conference on Artificial Intelligence}}
  \emph{(\bibinfo{series}{Frontiers in Artificial Intelligence and
  Applications}, Vol.~\bibinfo{volume}{413})}. \bibinfo{publisher}{IOS Press},
  \bibinfo{address}{Bologna, Italy}, \bibinfo{pages}{4097--4104}.
\newblock
\href{https://doi.org/10.3233/FAIA251300}{doi:\nolinkurl{10.3233/FAIA251300}}


\bibitem[Celiktemel et~al\mbox{.}(2026)]%
        {celiktemel2026group}
\bibfield{author}{\bibinfo{person}{Luis Celiktemel}, \bibinfo{person}{Edward
  Eichhorn}, \bibinfo{person}{Levin Brinkmann}, \bibinfo{person}{Robin
  Schimmelpfennig}, \bibinfo{person}{Aron Vallinder}, \bibinfo{person}{Yaomin
  Jiang}, \bibinfo{person}{Edward Hughes}, {and} \bibinfo{person}{Iyad
  Rahwan}.} \bibinfo{year}{2026}\natexlab{}.
\newblock \bibinfo{title}{Group Selection Promotes Prosocial Prompts in
  Populations of {LLM} Agents}.
\newblock
\showeprint[arxiv]{2606.23343}
\href{https://doi.org/10.48550/arXiv.2606.23343}{doi:\nolinkurl{10.48550/arXiv.2606.23343}}


\bibitem[Chan et~al\mbox{.}(2024)]%
        {chan2024visibility}
\bibfield{author}{\bibinfo{person}{Alan Chan}, \bibinfo{person}{Carson Ezell},
  \bibinfo{person}{Max Kaufmann}, \bibinfo{person}{Kevin Wei},
  \bibinfo{person}{Lewis Hammond}, \bibinfo{person}{Herbie Bradley},
  \bibinfo{person}{Emma Bluemke}, \bibinfo{person}{Nitarshan Rajkumar},
  \bibinfo{person}{David Krueger}, \bibinfo{person}{Noam Kolt},
  \bibinfo{person}{Lennart Heim}, {and} \bibinfo{person}{Markus Anderljung}.}
  \bibinfo{year}{2024}\natexlab{}.
\newblock \showarticletitle{Visibility into {AI} Agents}. In
  \bibinfo{booktitle}{\emph{Proceedings of the 2024 {ACM} Conference on
  Fairness, Accountability, and Transparency ({FAccT})}}.
  \bibinfo{publisher}{ACM}, \bibinfo{address}{Rio de Janeiro, Brazil},
  \bibinfo{pages}{958--973}.
\newblock
\href{https://doi.org/10.1145/3630106.3658948}{doi:\nolinkurl{10.1145/3630106.3658948}}


\bibitem[Chen et~al\mbox{.}(2023)]%
        {chen2023emergence}
\bibfield{author}{\bibinfo{person}{Yiting Chen}, \bibinfo{person}{Tracy~Xiao
  Liu}, \bibinfo{person}{You Shan}, {and} \bibinfo{person}{Songfa Zhong}.}
  \bibinfo{year}{2023}\natexlab{}.
\newblock \showarticletitle{The emergence of economic rationality of {GPT}}.
\newblock \bibinfo{journal}{\emph{Proceedings of the National Academy of
  Sciences}} \bibinfo{volume}{120}, \bibinfo{number}{51}
  (\bibinfo{year}{2023}), \bibinfo{pages}{e2316205120}.
\newblock
\href{https://doi.org/10.1073/pnas.2316205120}{doi:\nolinkurl{10.1073/pnas.2316205120}}


\bibitem[Dafoe et~al\mbox{.}(2020)]%
        {dafoe2020open}
\bibfield{author}{\bibinfo{person}{Allan Dafoe}, \bibinfo{person}{Edward
  Hughes}, \bibinfo{person}{Yoram Bachrach}, \bibinfo{person}{Tantum Collins},
  \bibinfo{person}{Kevin~R. McKee}, \bibinfo{person}{Joel~Z. Leibo},
  \bibinfo{person}{Kate Larson}, {and} \bibinfo{person}{Thore Graepel}.}
  \bibinfo{year}{2020}\natexlab{}.
\newblock \bibinfo{title}{Open Problems in Cooperative {AI}}.
\newblock
\showeprint[arxiv]{2012.08630}
\href{https://doi.org/10.48550/arXiv.2012.08630}{doi:\nolinkurl{10.48550/arXiv.2012.08630}}


\bibitem[Dodivers and Rafai(2025)]%
        {dodivers2025uncovering}
\bibfield{author}{\bibinfo{person}{Eleonore Dodivers} {and}
  \bibinfo{person}{Ismael Rafai}.} \bibinfo{year}{2025}\natexlab{}.
\newblock \showarticletitle{Uncovering the fairness of {AI}: exploring focal
  point, inequality aversion, and altruism in {ChatGPT}'s dictator game
  decisions}.
\newblock \bibinfo{journal}{\emph{Economics Bulletin}} \bibinfo{volume}{45},
  \bibinfo{number}{4} (\bibinfo{year}{2025}), \bibinfo{pages}{1818--1825}.
\newblock
\urldef\tempurl%
\url{https://www.accessecon.com/pubs/EB/default.aspx?PaperID=EB-25-00292&topic=Abstract}
\showURL{%
\tempurl}


\bibitem[Duong et~al\mbox{.}(2026)]%
        {duong2026cost}
\bibfield{author}{\bibinfo{person}{M.~H. Duong}, \bibinfo{person}{C.~M.
  Durbac}, {and} \bibinfo{person}{T.~A. Han}.} \bibinfo{year}{2026}\natexlab{}.
\newblock \bibinfo{title}{Cost of Institutional Incentives for Promoting
  Cooperation in $2\times 2$ Games and Collective Risk Games}.
\newblock \bibinfo{howpublished}{Dynamic Games and Applications}.
\newblock
\shownote{Online first, 7 February 2026}.
\newblock
\href{https://doi.org/10.1007/s13235-025-00696-6}{doi:\nolinkurl{10.1007/s13235-025-00696-6}}


\bibitem[Fan et~al\mbox{.}(2024)]%
        {fan2024can}
\bibfield{author}{\bibinfo{person}{Caoyun Fan}, \bibinfo{person}{Jindou Chen},
  \bibinfo{person}{Yaohui Jin}, {and} \bibinfo{person}{Hao He}.}
  \bibinfo{year}{2024}\natexlab{}.
\newblock \showarticletitle{Can {Large Language Models} Serve as Rational
  Players in {Game Theory}? A Systematic Analysis}. In
  \bibinfo{booktitle}{\emph{Proceedings of the AAAI Conference on Artificial
  Intelligence}}, Vol.~\bibinfo{volume}{38}. \bibinfo{publisher}{AAAI Press},
  \bibinfo{address}{Washington, DC, USA}, \bibinfo{pages}{17960--17967}.
\newblock
\href{https://doi.org/10.1609/aaai.v38i16.29751}{doi:\nolinkurl{10.1609/aaai.v38i16.29751}}


\bibitem[Fern{\'a}ndez~Domingos et~al\mbox{.}(2023)]%
        {fernandezdomingos2023egttools}
\bibfield{author}{\bibinfo{person}{Elias Fern{\'a}ndez~Domingos},
  \bibinfo{person}{Francisco~C. Santos}, {and} \bibinfo{person}{Tom Lenaerts}.}
  \bibinfo{year}{2023}\natexlab{}.
\newblock \showarticletitle{{EGTtools}: Evolutionary game dynamics in
  {Python}}.
\newblock \bibinfo{journal}{\emph{iScience}} \bibinfo{volume}{26},
  \bibinfo{number}{4} (\bibinfo{year}{2023}), \bibinfo{pages}{106419}.
\newblock
\href{https://doi.org/10.1016/j.isci.2023.106419}{doi:\nolinkurl{10.1016/j.isci.2023.106419}}


\bibitem[Fern{\'a}ndez~Domingos et~al\mbox{.}(2022)]%
        {fernandezdomingos2022delegation}
\bibfield{author}{\bibinfo{person}{Elias Fern{\'a}ndez~Domingos},
  \bibinfo{person}{In{\^e}s Terrucha}, \bibinfo{person}{R{\'e}mi Suchon},
  \bibinfo{person}{Jelena Gruji{\'c}}, \bibinfo{person}{Juan~C. Burguillo},
  \bibinfo{person}{Francisco~C. Santos}, {and} \bibinfo{person}{Tom Lenaerts}.}
  \bibinfo{year}{2022}\natexlab{}.
\newblock \showarticletitle{Delegation to artificial agents fosters prosocial
  behaviors in the collective risk dilemma}.
\newblock \bibinfo{journal}{\emph{Scientific Reports}}  \bibinfo{volume}{12}
  (\bibinfo{year}{2022}), \bibinfo{pages}{8492}.
\newblock
\href{https://doi.org/10.1038/s41598-022-11518-9}{doi:\nolinkurl{10.1038/s41598-022-11518-9}}


\bibitem[Filippas et~al\mbox{.}(2024)]%
        {horton2023large}
\bibfield{author}{\bibinfo{person}{Apostolos Filippas},
  \bibinfo{person}{John~J. Horton}, {and} \bibinfo{person}{Benjamin~S.
  Manning}.} \bibinfo{year}{2024}\natexlab{}.
\newblock \showarticletitle{Large Language Models as Simulated Economic Agents:
  What Can We Learn from {Homo Silicus}?}. In
  \bibinfo{booktitle}{\emph{Proceedings of the 25th ACM Conference on Economics
  and Computation}} \emph{(\bibinfo{series}{EC '24})}.
  \bibinfo{publisher}{ACM}, \bibinfo{address}{New Haven, CT, USA},
  \bibinfo{pages}{614--615}.
\newblock
\href{https://doi.org/10.1145/3670865.3673513}{doi:\nolinkurl{10.1145/3670865.3673513}}


\bibitem[Fontana et~al\mbox{.}(2025)]%
        {fontana2024nicer}
\bibfield{author}{\bibinfo{person}{Nicol{\`o} Fontana},
  \bibinfo{person}{Francesco Pierri}, {and} \bibinfo{person}{Luca~Maria
  Aiello}.} \bibinfo{year}{2025}\natexlab{}.
\newblock \showarticletitle{Nicer than Humans: How Do Large Language Models
  Behave in the Prisoner's Dilemma?}
\newblock \bibinfo{journal}{\emph{Proceedings of the International AAAI
  Conference on Web and Social Media}}  \bibinfo{volume}{19}
  (\bibinfo{year}{2025}), \bibinfo{pages}{522--535}.
\newblock
\href{https://doi.org/10.1609/icwsm.v19i1.35829}{doi:\nolinkurl{10.1609/icwsm.v19i1.35829}}


\bibitem[Fudenberg and Imhof(2006)]%
        {fudenberg2006imitation}
\bibfield{author}{\bibinfo{person}{Drew Fudenberg} {and}
  \bibinfo{person}{Lorens~A. Imhof}.} \bibinfo{year}{2006}\natexlab{}.
\newblock \showarticletitle{Imitation processes with small mutations}.
\newblock \bibinfo{journal}{\emph{Journal of Economic Theory}}
  \bibinfo{volume}{131}, \bibinfo{number}{1} (\bibinfo{year}{2006}),
  \bibinfo{pages}{251--262}.
\newblock
\href{https://doi.org/10.1016/j.jet.2005.04.006}{doi:\nolinkurl{10.1016/j.jet.2005.04.006}}


\bibitem[Guzman~Piedrahita et~al\mbox{.}(2025)]%
        {guzmanpiedrahita2025corrupted}
\bibfield{author}{\bibinfo{person}{David Guzman~Piedrahita},
  \bibinfo{person}{Yongjin Yang}, \bibinfo{person}{Mrinmaya Sachan},
  \bibinfo{person}{Giorgia Ramponi}, \bibinfo{person}{Bernhard Sch{\"o}lkopf},
  {and} \bibinfo{person}{Zhijing Jin}.} \bibinfo{year}{2025}\natexlab{}.
\newblock \showarticletitle{Corrupted by Reasoning: Reasoning Language Models
  Become Free-Riders in Public Goods Games}. In
  \bibinfo{booktitle}{\emph{Conference on Language Modeling (COLM)}}.
  \bibinfo{publisher}{OpenReview.net}, \bibinfo{address}{Montreal, Canada},
  \bibinfo{numpages}{37}~pages.
\newblock
\urldef\tempurl%
\url{https://openreview.net/forum?id=kH6LOHGjEl}
\showURL{%
\tempurl}


\bibitem[Hammond et~al\mbox{.}(2025)]%
        {hammond2025multi}
\bibfield{author}{\bibinfo{person}{Lewis Hammond}, \bibinfo{person}{Alan Chan},
  \bibinfo{person}{Jesse Clifton}, \bibinfo{person}{Jason Hoelscher-Obermaier},
  \bibinfo{person}{Akbir Khan}, \bibinfo{person}{Euan McLean},
  \bibinfo{person}{Chandler Smith}, \bibinfo{person}{Wolfram Barfuss},
  \bibinfo{person}{Jakob Foerster}, \bibinfo{person}{Tom{\'a}{\v{s}}
  Gaven{\v{c}}iak}, \bibinfo{person}{The~Anh Han}, \bibinfo{person}{Edward
  Hughes}, \bibinfo{person}{Vojt{\v{e}}ch Kova{\v{r}}{\'\i}k},
  \bibinfo{person}{Jan Kulveit}, \bibinfo{person}{Joel~Z. Leibo},
  \bibinfo{person}{Caspar Oesterheld}, \bibinfo{person}{Christian Schroeder~de
  Witt}, \bibinfo{person}{Nisarg Shah}, \bibinfo{person}{Michael Wellman},
  \bibinfo{person}{Paolo Bova}, \bibinfo{person}{Theodor Cimpeanu},
  \bibinfo{person}{Carson Ezell}, \bibinfo{person}{Quentin Feuillade-Montixi},
  \bibinfo{person}{Matija Franklin}, \bibinfo{person}{Esben Kran},
  \bibinfo{person}{Igor Krawczuk}, \bibinfo{person}{Max Lamparth},
  \bibinfo{person}{Niklas Lauffer}, \bibinfo{person}{Alexander Meinke},
  \bibinfo{person}{Sumeet Motwani}, \bibinfo{person}{Anka Reuel},
  \bibinfo{person}{Vincent Conitzer}, \bibinfo{person}{Michael Dennis},
  \bibinfo{person}{Iason Gabriel}, \bibinfo{person}{Adam Gleave},
  \bibinfo{person}{Gillian Hadfield}, \bibinfo{person}{Nika Haghtalab},
  \bibinfo{person}{Atoosa Kasirzadeh}, \bibinfo{person}{S{\'e}bastien Krier},
  \bibinfo{person}{Kate Larson}, \bibinfo{person}{Joel Lehman},
  \bibinfo{person}{David~C. Parkes}, \bibinfo{person}{Georgios Piliouras},
  {and} \bibinfo{person}{Iyad Rahwan}.} \bibinfo{year}{2025}\natexlab{}.
\newblock \bibinfo{title}{Multi-Agent Risks from Advanced {AI}}.
\newblock
\shownote{Cooperative AI Foundation, Technical Report \#1}.
\newblock
\showeprint[arxiv]{2502.14143}
\href{https://doi.org/10.48550/arXiv.2502.14143}{doi:\nolinkurl{10.48550/arXiv.2502.14143}}


\bibitem[Han(2022)]%
        {han2022emergent}
\bibfield{author}{\bibinfo{person}{The~Anh Han}.}
  \bibinfo{year}{2022}\natexlab{}.
\newblock \showarticletitle{Emergent behaviours in multi-agent systems with
  Evolutionary Game Theory}.
\newblock \bibinfo{journal}{\emph{AI Communications}} \bibinfo{volume}{35},
  \bibinfo{number}{4} (\bibinfo{year}{2022}), \bibinfo{pages}{327--337}.
\newblock
\href{https://doi.org/10.3233/AIC-220104}{doi:\nolinkurl{10.3233/AIC-220104}}


\bibitem[Han et~al\mbox{.}(2024)]%
        {han2024evolutionary}
\bibfield{author}{\bibinfo{person}{The~Anh Han}, \bibinfo{person}{Manh~Hong
  Duong}, {and} \bibinfo{person}{Matja{\v{z}} Perc}.}
  \bibinfo{year}{2024}\natexlab{}.
\newblock \showarticletitle{Evolutionary mechanisms that promote cooperation
  may not promote social welfare}.
\newblock \bibinfo{journal}{\emph{Journal of the Royal Society Interface}}
  \bibinfo{volume}{21}, \bibinfo{number}{220} (\bibinfo{year}{2024}),
  \bibinfo{pages}{20240547}.
\newblock
\href{https://doi.org/10.1098/rsif.2024.0547}{doi:\nolinkurl{10.1098/rsif.2024.0547}}


\bibitem[Han et~al\mbox{.}(2017)]%
        {han2017evolution}
\bibfield{author}{\bibinfo{person}{The~Anh Han},
  \bibinfo{person}{Lu{\'\i}s~Moniz Pereira}, {and} \bibinfo{person}{Tom
  Lenaerts}.} \bibinfo{year}{2017}\natexlab{}.
\newblock \showarticletitle{Evolution of commitment and level of participation
  in public goods games}.
\newblock \bibinfo{journal}{\emph{Autonomous Agents and Multi-Agent Systems}}
  \bibinfo{volume}{31}, \bibinfo{number}{3} (\bibinfo{year}{2017}),
  \bibinfo{pages}{561--583}.
\newblock
\href{https://doi.org/10.1007/s10458-016-9338-4}{doi:\nolinkurl{10.1007/s10458-016-9338-4}}


\bibitem[Han et~al\mbox{.}(2012)]%
        {han2012emergence}
\bibfield{author}{\bibinfo{person}{The~Anh Han},
  \bibinfo{person}{Lu{\'i}s~Moniz Pereira}, {and} \bibinfo{person}{Francisco~C.
  Santos}.} \bibinfo{year}{2012}\natexlab{}.
\newblock \showarticletitle{The emergence of commitments and cooperation}. In
  \bibinfo{booktitle}{\emph{Proceedings of the 11th International Conference on
  Autonomous Agents and Multiagent Systems (AAMAS 2012)}}.
  \bibinfo{publisher}{International Foundation for Autonomous Agents and
  Multiagent Systems}, \bibinfo{address}{Richland, SC},
  \bibinfo{pages}{559--566}.
\newblock


\bibitem[Huang et~al\mbox{.}(2025)]%
        {huang2025competing}
\bibfield{author}{\bibinfo{person}{Jen-tse Huang}, \bibinfo{person}{Eric~John
  Li}, \bibinfo{person}{Man~Ho Lam}, \bibinfo{person}{Tian Liang},
  \bibinfo{person}{Wenxuan Wang}, \bibinfo{person}{Youliang Yuan},
  \bibinfo{person}{Wenxiang Jiao}, \bibinfo{person}{Xing Wang},
  \bibinfo{person}{Zhaopeng Tu}, {and} \bibinfo{person}{Michael~R. Lyu}.}
  \bibinfo{year}{2025}\natexlab{}.
\newblock \showarticletitle{Competing Large Language Models in Multi-Agent
  Gaming Environments}. In \bibinfo{booktitle}{\emph{International Conference
  on Learning Representations (ICLR)}}. \bibinfo{publisher}{OpenReview.net},
  \bibinfo{address}{Singapore}, \bibinfo{pages}{65639--65681}.
\newblock
\urldef\tempurl%
\url{https://proceedings.iclr.cc/paper_files/paper/2025/hash/a46adbe2f0ca0e16ef8857e188991ad7-Abstract-Conference.html}
\showURL{%
\tempurl}


\bibitem[Hughes et~al\mbox{.}(2018)]%
        {hughes2018inequity}
\bibfield{author}{\bibinfo{person}{Edward Hughes}, \bibinfo{person}{Joel~Z.
  Leibo}, \bibinfo{person}{Matthew Phillips}, \bibinfo{person}{Karl Tuyls},
  \bibinfo{person}{Edgar Due{\~n}ez-Guzman}, \bibinfo{person}{Antonio
  Garc{\'\i}a~Casta{\~n}eda}, \bibinfo{person}{Iain Dunning},
  \bibinfo{person}{Tina Zhu}, \bibinfo{person}{Kevin McKee},
  \bibinfo{person}{Raphael Koster}, \bibinfo{person}{Heather Roff}, {and}
  \bibinfo{person}{Thore Graepel}.} \bibinfo{year}{2018}\natexlab{}.
\newblock \showarticletitle{Inequity Aversion Improves Cooperation in
  Intertemporal Social Dilemmas}. In \bibinfo{booktitle}{\emph{Advances in
  Neural Information Processing Systems}}, Vol.~\bibinfo{volume}{31}.
  \bibinfo{publisher}{Curran Associates, Inc.}, \bibinfo{address}{Red Hook, NY,
  USA}, \bibinfo{pages}{3330--3340}.
\newblock
\urldef\tempurl%
\url{https://proceedings.neurips.cc/paper_files/paper/2018/hash/7fea637fd6d02b8f0adf6f7dc36aed93-Abstract.html}
\showURL{%
\tempurl}


\bibitem[Huynh et~al\mbox{.}(2025)]%
        {huynh2025understanding}
\bibfield{author}{\bibinfo{person}{Trung-Kiet Huynh}, \bibinfo{person}{Duy-Minh
  Dao-Sy}, \bibinfo{person}{Thanh-Bang Cao}, \bibinfo{person}{Phong-Hao Le},
  \bibinfo{person}{Hong-Dan Nguyen}, \bibinfo{person}{Phu-Quy Nguyen-Lam},
  \bibinfo{person}{Minh-Luan Nguyen-Vo}, \bibinfo{person}{Hong-Phat Pham},
  \bibinfo{person}{Phu-Hoa Pham}, \bibinfo{person}{Thien-Kim Than},
  \bibinfo{person}{Chi-Nguyen Tran}, \bibinfo{person}{Huy Tran},
  \bibinfo{person}{Gia-Thoai Tran-Le}, \bibinfo{person}{Alessio Buscemi},
  \bibinfo{person}{Hong~Trang Le}, {and} \bibinfo{person}{The~Anh Han}.}
  \bibinfo{year}{2025}\natexlab{}.
\newblock \bibinfo{title}{Understanding {LLM} Agent Behaviours via Game Theory:
  Strategy Recognition, Biases and Multi-Agent Dynamics}.
\newblock
\showeprint[arxiv]{2512.07462}
\href{https://doi.org/10.48550/arXiv.2512.07462}{doi:\nolinkurl{10.48550/arXiv.2512.07462}}


\bibitem[Huynh et~al\mbox{.}(2026)]%
        {huynh2026payoff}
\bibfield{author}{\bibinfo{person}{Trung-Kiet Huynh}, \bibinfo{person}{Duy-Minh
  Dao-Sy}, \bibinfo{person}{Thanh-Bang Cao}, \bibinfo{person}{Phong-Hao Le},
  \bibinfo{person}{Hong-Dan Nguyen}, \bibinfo{person}{Phu-Quy Nguyen-Lam},
  \bibinfo{person}{Minh-Luan Nguyen-Vo}, \bibinfo{person}{Hong-Phat Pham},
  \bibinfo{person}{Phu-Hoa Pham}, \bibinfo{person}{Thien-Kim Than},
  \bibinfo{person}{Chi-Nguyen Tran}, \bibinfo{person}{Huy Tran},
  \bibinfo{person}{Gia-Thoai Tran-Le}, \bibinfo{person}{Alessio Buscemi},
  \bibinfo{person}{Hong~Trang Le}, {and} \bibinfo{person}{The~Anh Han}.}
  \bibinfo{year}{2026}\natexlab{}.
\newblock \bibinfo{title}{Payoff scaling shapes cooperation in {LLM} agents
  across languages}.
\newblock
\showeprint[arxiv]{2601.19082}
\href{https://doi.org/10.48550/arXiv.2601.19082}{doi:\nolinkurl{10.48550/arXiv.2601.19082}}


\bibitem[Kempinski et~al\mbox{.}(2025)]%
        {kempinski2025game}
\bibfield{author}{\bibinfo{person}{Benjamin Kempinski}, \bibinfo{person}{Ian
  Gemp}, \bibinfo{person}{Kate Larson}, \bibinfo{person}{Marc Lanctot},
  \bibinfo{person}{Yoram Bachrach}, {and} \bibinfo{person}{Tal Kachman}.}
  \bibinfo{year}{2025}\natexlab{}.
\newblock \showarticletitle{Game of Thoughts: Iterative Reasoning in
  Game-Theoretic Domains with Large Language Models}. In
  \bibinfo{booktitle}{\emph{Proceedings of the 24th International Conference on
  Autonomous Agents and Multiagent Systems (AAMAS 2025)}}.
  \bibinfo{publisher}{IFAAMAS}, \bibinfo{address}{Detroit, MI, USA},
  \bibinfo{pages}{1088--1097}.
\newblock
\href{https://doi.org/10.65109/gzfu8152}{doi:\nolinkurl{10.65109/gzfu8152}}


\bibitem[Kumar et~al\mbox{.}(2026)]%
        {kumar2026cooperative}
\bibfield{author}{\bibinfo{person}{Shivani Kumar}, \bibinfo{person}{Adarsh
  Bharathwaj}, {and} \bibinfo{person}{David Jurgens}.}
  \bibinfo{year}{2026}\natexlab{}.
\newblock \bibinfo{title}{Cooperative Profiles Predict Multi-Agent {LLM} Team
  Performance in {AI} for Science Workflows}.
\newblock
\showeprint[arxiv]{2604.20658}
\href{https://doi.org/10.48550/arXiv.2604.20658}{doi:\nolinkurl{10.48550/arXiv.2604.20658}}


\bibitem[Leibo et~al\mbox{.}(2021)]%
        {leibo2021scalable}
\bibfield{author}{\bibinfo{person}{Joel~Z. Leibo}, \bibinfo{person}{Edgar~A.
  Due{\~n}ez-Guzman}, \bibinfo{person}{Alexander Vezhnevets},
  \bibinfo{person}{John~P. Agapiou}, \bibinfo{person}{Peter Sunehag},
  \bibinfo{person}{Raphael Koster}, \bibinfo{person}{Jayd Matyas},
  \bibinfo{person}{Charlie Beattie}, \bibinfo{person}{Igor Mordatch}, {and}
  \bibinfo{person}{Thore Graepel}.} \bibinfo{year}{2021}\natexlab{}.
\newblock \showarticletitle{Scalable Evaluation of Multi-Agent Reinforcement
  Learning with {Melting Pot}}. In \bibinfo{booktitle}{\emph{Proceedings of the
  38th International Conference on Machine Learning}}
  \emph{(\bibinfo{series}{Proceedings of Machine Learning Research},
  Vol.~\bibinfo{volume}{139})}. \bibinfo{publisher}{PMLR},
  \bibinfo{address}{Virtual Event}, \bibinfo{pages}{6187--6199}.
\newblock
\urldef\tempurl%
\url{https://proceedings.mlr.press/v139/leibo21a.html}
\showURL{%
\tempurl}


\bibitem[Leibo et~al\mbox{.}(2017)]%
        {leibo2017multi}
\bibfield{author}{\bibinfo{person}{Joel~Z. Leibo}, \bibinfo{person}{Vinicius
  Zambaldi}, \bibinfo{person}{Marc Lanctot}, \bibinfo{person}{Janusz Marecki},
  {and} \bibinfo{person}{Thore Graepel}.} \bibinfo{year}{2017}\natexlab{}.
\newblock \showarticletitle{Multi-agent Reinforcement Learning in Sequential
  Social Dilemmas}. In \bibinfo{booktitle}{\emph{Proceedings of the 16th
  International Conference on Autonomous Agents and Multiagent Systems (AAMAS
  2017)}}. \bibinfo{publisher}{IFAAMAS}, \bibinfo{address}{S\~{a}o Paulo,
  Brazil}, \bibinfo{pages}{464--473}.
\newblock
\urldef\tempurl%
\url{https://www.ifaamas.org/Proceedings/aamas2017/pdfs/p464.pdf}
\showURL{%
\tempurl}


\bibitem[Lor{\`e} and Heydari(2024)]%
        {lore2024strategic}
\bibfield{author}{\bibinfo{person}{Nunzio Lor{\`e}} {and}
  \bibinfo{person}{Babak Heydari}.} \bibinfo{year}{2024}\natexlab{}.
\newblock \showarticletitle{Strategic behavior of large language models and the
  role of game structure versus contextual framing}.
\newblock \bibinfo{journal}{\emph{Scientific Reports}}  \bibinfo{volume}{14}
  (\bibinfo{year}{2024}), \bibinfo{pages}{18490}.
\newblock
\href{https://doi.org/10.1038/s41598-024-69032-z}{doi:\nolinkurl{10.1038/s41598-024-69032-z}}


\bibitem[Mei et~al\mbox{.}(2024)]%
        {mei2024turing}
\bibfield{author}{\bibinfo{person}{Qiaozhu Mei}, \bibinfo{person}{Yutong Xie},
  \bibinfo{person}{Walter Yuan}, {and} \bibinfo{person}{Matthew~O. Jackson}.}
  \bibinfo{year}{2024}\natexlab{}.
\newblock \showarticletitle{A {Turing} test of whether {AI} chatbots are
  behaviorally similar to humans}.
\newblock \bibinfo{journal}{\emph{Proceedings of the National Academy of
  Sciences}} \bibinfo{volume}{121}, \bibinfo{number}{9} (\bibinfo{year}{2024}),
  \bibinfo{pages}{e2313925121}.
\newblock
\href{https://doi.org/10.1073/pnas.2313925121}{doi:\nolinkurl{10.1073/pnas.2313925121}}


\bibitem[Merhej et~al\mbox{.}(2021)]%
        {merhej2021cooperation}
\bibfield{author}{\bibinfo{person}{Ramona Merhej}, \bibinfo{person}{Fernando~P.
  Santos}, \bibinfo{person}{Francisco~S. Melo}, {and}
  \bibinfo{person}{Francisco~C. Santos}.} \bibinfo{year}{2021}\natexlab{}.
\newblock \showarticletitle{Cooperation between Independent Reinforcement
  Learners under Wealth Inequality and Collective Risks}. In
  \bibinfo{booktitle}{\emph{Proceedings of the 20th International Conference on
  Autonomous Agents and MultiAgent Systems}} (Virtual Event, United Kingdom)
  \emph{(\bibinfo{series}{AAMAS '21})}. \bibinfo{publisher}{International
  Foundation for Autonomous Agents and Multiagent Systems},
  \bibinfo{address}{Richland, SC}, \bibinfo{pages}{898--906}.
\newblock
\showISBNx{9781450383073}
\urldef\tempurl%
\url{https://www.ifaamas.org/Proceedings/aamas2021/pdfs/p898.pdf}
\showURL{%
\tempurl}


\bibitem[Milinski et~al\mbox{.}(2008)]%
        {milinski2008collective}
\bibfield{author}{\bibinfo{person}{Manfred Milinski}, \bibinfo{person}{Ralf~D.
  Sommerfeld}, \bibinfo{person}{Hans-J{\"u}rgen Krambeck},
  \bibinfo{person}{Floyd~A. Reed}, {and} \bibinfo{person}{Jochem Marotzke}.}
  \bibinfo{year}{2008}\natexlab{}.
\newblock \showarticletitle{The collective-risk social dilemma and the
  prevention of simulated dangerous climate change}.
\newblock \bibinfo{journal}{\emph{Proceedings of the National Academy of
  Sciences}} \bibinfo{volume}{105}, \bibinfo{number}{7} (\bibinfo{year}{2008}),
  \bibinfo{pages}{2291--2294}.
\newblock
\href{https://doi.org/10.1073/pnas.0709546105}{doi:\nolinkurl{10.1073/pnas.0709546105}}


\bibitem[Omidshafiei et~al\mbox{.}(2019)]%
        {omidshafiei2019alpha}
\bibfield{author}{\bibinfo{person}{Shayegan Omidshafiei},
  \bibinfo{person}{Christos Papadimitriou}, \bibinfo{person}{Georgios
  Piliouras}, \bibinfo{person}{Karl Tuyls}, \bibinfo{person}{Mark Rowland},
  \bibinfo{person}{Jean-Baptiste Lespiau}, \bibinfo{person}{Wojciech~M.
  Czarnecki}, \bibinfo{person}{Marc Lanctot}, \bibinfo{person}{Julien Perolat},
  {and} \bibinfo{person}{Remi Munos}.} \bibinfo{year}{2019}\natexlab{}.
\newblock \showarticletitle{{$\alpha$-Rank}: Multi-Agent Evaluation by
  Evolution}.
\newblock \bibinfo{journal}{\emph{Scientific Reports}} \bibinfo{volume}{9},
  \bibinfo{number}{1} (\bibinfo{year}{2019}), \bibinfo{pages}{9937}.
\newblock
\href{https://doi.org/10.1038/s41598-019-45619-9}{doi:\nolinkurl{10.1038/s41598-019-45619-9}}


\bibitem[Pal et~al\mbox{.}(2026)]%
        {pal2026large}
\bibfield{author}{\bibinfo{person}{Saptarshi Pal}, \bibinfo{person}{Abhishek
  Mallela}, \bibinfo{person}{Lenz Pracher}, \bibinfo{person}{Chiyu Wei},
  \bibinfo{person}{Feng Fu}, \bibinfo{person}{Santiago Schnell}, {and}
  \bibinfo{person}{Martin~A. Nowak}.} \bibinfo{year}{2026}\natexlab{}.
\newblock \showarticletitle{Large language models instantiate evolutionarily
  robust strategies of cooperation}.
\newblock \bibinfo{journal}{\emph{PNAS Nexus}} \bibinfo{volume}{5},
  \bibinfo{number}{6} (\bibinfo{year}{2026}), \bibinfo{pages}{pgag210}.
\newblock
\href{https://doi.org/10.1093/pnasnexus/pgag210}{doi:\nolinkurl{10.1093/pnasnexus/pgag210}}


\bibitem[Palfrey and Rosenthal(1984)]%
        {palfrey1984participation}
\bibfield{author}{\bibinfo{person}{Thomas~R. Palfrey} {and}
  \bibinfo{person}{Howard Rosenthal}.} \bibinfo{year}{1984}\natexlab{}.
\newblock \showarticletitle{Participation and the provision of discrete public
  goods: a strategic analysis}.
\newblock \bibinfo{journal}{\emph{Journal of Public Economics}}
  \bibinfo{volume}{24}, \bibinfo{number}{2} (\bibinfo{year}{1984}),
  \bibinfo{pages}{171--193}.
\newblock
\href{https://doi.org/10.1016/0047-2727(84)90023-9}{doi:\nolinkurl{10.1016/0047-2727(84)90023-9}}


\bibitem[Payne and Alloui-Cros(2025)]%
        {payne2025strategic}
\bibfield{author}{\bibinfo{person}{Kenneth Payne} {and}
  \bibinfo{person}{Baptiste Alloui-Cros}.} \bibinfo{year}{2025}\natexlab{}.
\newblock \bibinfo{title}{Strategic Intelligence in Large Language Models:
  Evidence from evolutionary Game Theory}.
\newblock
\showeprint[arxiv]{2507.02618}
\href{https://doi.org/10.48550/arXiv.2507.02618}{doi:\nolinkurl{10.48550/arXiv.2507.02618}}


\bibitem[Pham et~al\mbox{.}(2026)]%
        {pham2026humans}
\bibfield{author}{\bibinfo{person}{Phu~Hoa Pham}, \bibinfo{person}{Duy~Minh
  {Dao Sy}}, \bibinfo{person}{Trung~Kiet Huynh}, \bibinfo{person}{Phu~Quy
  {Nguyen Lam}}, \bibinfo{person}{Chi~Nguyen Tran}, \bibinfo{person}{Minh~Trung
  Le}, \bibinfo{person}{Phong~Hao Le}, \bibinfo{person}{Dinh~Nam Nguyen},
  \bibinfo{person}{Thien~Ky {Nguyen Dong}}, \bibinfo{person}{Elias {Fernandez
  Domingos}}, \bibinfo{person}{Hong~Trang Le}, {and} \bibinfo{person}{The~Anh
  Han}.} \bibinfo{year}{2026}\natexlab{}.
\newblock \bibinfo{title}{Humans Are More Diverse: Frontier {LLMs} Show Extreme
  Policies in Idealised {AI} Development Races}.
\newblock
\showeprint[arxiv]{2608.01193}
\href{https://doi.org/10.48550/arXiv.2608.01193}{doi:\nolinkurl{10.48550/arXiv.2608.01193}}


\bibitem[Piatti et~al\mbox{.}(2024)]%
        {piatti2024cooperate}
\bibfield{author}{\bibinfo{person}{Giorgio Piatti}, \bibinfo{person}{Zhijing
  Jin}, \bibinfo{person}{Max Kleiman-Weiner}, \bibinfo{person}{Bernhard
  Sch{\"o}lkopf}, \bibinfo{person}{Mrinmaya Sachan}, {and}
  \bibinfo{person}{Rada Mihalcea}.} \bibinfo{year}{2024}\natexlab{}.
\newblock \showarticletitle{Cooperate or Collapse: Emergence of Sustainable
  Cooperation in a Society of {LLM} Agents}. In
  \bibinfo{booktitle}{\emph{Advances in Neural Information Processing
  Systems}}, Vol.~\bibinfo{volume}{37}. \bibinfo{publisher}{Curran Associates,
  Inc.}, \bibinfo{address}{Vancouver, Canada}, \bibinfo{pages}{111715--111759}.
\newblock
\href{https://doi.org/10.52202/079017-3548}{doi:\nolinkurl{10.52202/079017-3548}}


\bibitem[Rahwan et~al\mbox{.}(2019)]%
        {rahwan2019machine}
\bibfield{author}{\bibinfo{person}{Iyad Rahwan}, \bibinfo{person}{Manuel
  Cebrian}, \bibinfo{person}{Nick Obradovich}, \bibinfo{person}{Josh Bongard},
  \bibinfo{person}{Jean-Fran{\c{c}}ois Bonnefon}, \bibinfo{person}{Cynthia
  Breazeal}, \bibinfo{person}{Jacob~W. Crandall}, \bibinfo{person}{Nicholas~A.
  Christakis}, \bibinfo{person}{Iain~D. Couzin}, \bibinfo{person}{Matthew~O.
  Jackson}, \bibinfo{person}{Nicholas~R. Jennings}, \bibinfo{person}{Ece
  Kamar}, \bibinfo{person}{Isabel~M. Kloumann}, \bibinfo{person}{Hugo
  Larochelle}, \bibinfo{person}{David Lazer}, \bibinfo{person}{Richard
  McElreath}, \bibinfo{person}{Alan Mislove}, \bibinfo{person}{David~C.
  Parkes}, \bibinfo{person}{Alex~`Sandy' Pentland},
  \bibinfo{person}{Margaret~E. Roberts}, \bibinfo{person}{Azim Shariff},
  \bibinfo{person}{Joshua~B. Tenenbaum}, {and} \bibinfo{person}{Michael
  Wellman}.} \bibinfo{year}{2019}\natexlab{}.
\newblock \showarticletitle{Machine behaviour}.
\newblock \bibinfo{journal}{\emph{Nature}} \bibinfo{volume}{568},
  \bibinfo{number}{7753} (\bibinfo{year}{2019}), \bibinfo{pages}{477--486}.
\newblock
\href{https://doi.org/10.1038/s41586-019-1138-y}{doi:\nolinkurl{10.1038/s41586-019-1138-y}}


\bibitem[Rand et~al\mbox{.}(2013)]%
        {rand2013evolution}
\bibfield{author}{\bibinfo{person}{David~G. Rand}, \bibinfo{person}{Corina~E.
  Tarnita}, \bibinfo{person}{Hisashi Ohtsuki}, {and} \bibinfo{person}{Martin~A.
  Nowak}.} \bibinfo{year}{2013}\natexlab{}.
\newblock \showarticletitle{Evolution of fairness in the one-shot anonymous
  {Ultimatum Game}}.
\newblock \bibinfo{journal}{\emph{Proceedings of the National Academy of
  Sciences}} \bibinfo{volume}{110}, \bibinfo{number}{7} (\bibinfo{year}{2013}),
  \bibinfo{pages}{2581--2586}.
\newblock
\href{https://doi.org/10.1073/pnas.1214167110}{doi:\nolinkurl{10.1073/pnas.1214167110}}


\bibitem[Santos and Pacheco(2011)]%
        {santos2011risk}
\bibfield{author}{\bibinfo{person}{Francisco~C. Santos} {and}
  \bibinfo{person}{Jorge~M. Pacheco}.} \bibinfo{year}{2011}\natexlab{}.
\newblock \showarticletitle{Risk of collective failure provides an escape from
  the tragedy of the commons}.
\newblock \bibinfo{journal}{\emph{Proceedings of the National Academy of
  Sciences}} \bibinfo{volume}{108}, \bibinfo{number}{26}
  (\bibinfo{year}{2011}), \bibinfo{pages}{10421--10425}.
\newblock
\href{https://doi.org/10.1073/pnas.1015648108}{doi:\nolinkurl{10.1073/pnas.1015648108}}


\bibitem[Santos et~al\mbox{.}(2020)]%
        {santos2020picky}
\bibfield{author}{\bibinfo{person}{Fernando~P. Santos}, \bibinfo{person}{Samuel
  Mascarenhas}, \bibinfo{person}{Francisco~C. Santos}, \bibinfo{person}{Filipa
  Correia}, \bibinfo{person}{Samuel Gomes}, {and} \bibinfo{person}{Ana Paiva}.}
  \bibinfo{year}{2020}\natexlab{}.
\newblock \showarticletitle{Picky losers and carefree winners prevail in
  collective risk dilemmas with partner selection}.
\newblock \bibinfo{journal}{\emph{Autonomous Agents and Multi-Agent Systems}}
  \bibinfo{volume}{34}, \bibinfo{number}{2} (\bibinfo{year}{2020}),
  \bibinfo{pages}{40}.
\newblock
\href{https://doi.org/10.1007/s10458-020-09463-w}{doi:\nolinkurl{10.1007/s10458-020-09463-w}}


\bibitem[Santos et~al\mbox{.}(2019)]%
        {santos2019outcome}
\bibfield{author}{\bibinfo{person}{Fernando~P. Santos},
  \bibinfo{person}{Samuel~F. Mascarenhas}, \bibinfo{person}{Francisco~C.
  Santos}, \bibinfo{person}{Filipa Correia}, \bibinfo{person}{Samuel Gomes},
  {and} \bibinfo{person}{Ana Paiva}.} \bibinfo{year}{2019}\natexlab{}.
\newblock \showarticletitle{Outcome-based Partner Selection in Collective Risk
  Dilemmas}. In \bibinfo{booktitle}{\emph{Proceedings of the 18th International
  Conference on Autonomous Agents and MultiAgent Systems}} (Montreal, QC,
  Canada) \emph{(\bibinfo{series}{AAMAS '19})}.
  \bibinfo{publisher}{International Foundation for Autonomous Agents and
  Multiagent Systems}, \bibinfo{address}{Richland, SC},
  \bibinfo{pages}{1556--1564}.
\newblock
\showISBNx{9781450363099}
\urldef\tempurl%
\url{https://www.ifaamas.org/Proceedings/aamas2019/pdfs/p1556.pdf}
\showURL{%
\tempurl}


\bibitem[Schaffer(1988)]%
        {schaffer1988evolutionarily}
\bibfield{author}{\bibinfo{person}{Mark~E. Schaffer}.}
  \bibinfo{year}{1988}\natexlab{}.
\newblock \showarticletitle{Evolutionarily stable strategies for a finite
  population and a variable contest size}.
\newblock \bibinfo{journal}{\emph{Journal of Theoretical Biology}}
  \bibinfo{volume}{132}, \bibinfo{number}{4} (\bibinfo{year}{1988}),
  \bibinfo{pages}{469--478}.
\newblock
\href{https://doi.org/10.1016/S0022-5193(88)80085-7}{doi:\nolinkurl{10.1016/S0022-5193(88)80085-7}}


\bibitem[Schmied et~al\mbox{.}(2026)]%
        {schmied2026llms}
\bibfield{author}{\bibinfo{person}{Thomas Schmied}, \bibinfo{person}{J{\"o}rg
  Bornschein}, \bibinfo{person}{Jordi Grau-Moya}, \bibinfo{person}{Markus
  Wulfmeier}, {and} \bibinfo{person}{Razvan Pascanu}.}
  \bibinfo{year}{2026}\natexlab{}.
\newblock \showarticletitle{{LLMs} are Greedy Agents: Effects of {RL}
  Fine-tuning on Decision-Making Abilities}. In
  \bibinfo{booktitle}{\emph{International Conference on Learning
  Representations (ICLR)}}. \bibinfo{publisher}{OpenReview.net},
  \bibinfo{address}{Rio de Janeiro, Brazil}, \bibinfo{pages}{127538--127571}.
\newblock
\urldef\tempurl%
\url{https://proceedings.iclr.cc/paper_files/paper/2026/hash/cf2fcda40de8520027af2eadcafdecd6-Abstract-Conference.html}
\showURL{%
\tempurl}


\bibitem[Sigmund et~al\mbox{.}(2010)]%
        {sigmund2010social}
\bibfield{author}{\bibinfo{person}{Karl Sigmund}, \bibinfo{person}{Hannelore
  De~Silva}, \bibinfo{person}{Arne Traulsen}, {and} \bibinfo{person}{Christoph
  Hauert}.} \bibinfo{year}{2010}\natexlab{}.
\newblock \showarticletitle{Social learning promotes institutions for governing
  the commons}.
\newblock \bibinfo{journal}{\emph{Nature}} \bibinfo{volume}{466},
  \bibinfo{number}{7308} (\bibinfo{year}{2010}), \bibinfo{pages}{861--863}.
\newblock
\href{https://doi.org/10.1038/nature09203}{doi:\nolinkurl{10.1038/nature09203}}


\bibitem[Slumbers et~al\mbox{.}(2025)]%
        {slumbers2025using}
\bibfield{author}{\bibinfo{person}{Oliver Slumbers}, \bibinfo{person}{Joel~Z.
  Leibo}, {and} \bibinfo{person}{Marco~A. Janssen}.}
  \bibinfo{year}{2025}\natexlab{}.
\newblock \bibinfo{title}{Using Large Language Models to Simulate Human
  Behavioural Experiments: Port of Mars}.
\newblock
\showeprint[arxiv]{2506.05555}
\href{https://doi.org/10.48550/arXiv.2506.05555}{doi:\nolinkurl{10.48550/arXiv.2506.05555}}


\bibitem[Stone et~al\mbox{.}(2010)]%
        {stone2010ad}
\bibfield{author}{\bibinfo{person}{Peter Stone}, \bibinfo{person}{Gal~A.
  Kaminka}, \bibinfo{person}{Sarit Kraus}, {and} \bibinfo{person}{Jeffrey~S.
  Rosenschein}.} \bibinfo{year}{2010}\natexlab{}.
\newblock \showarticletitle{Ad Hoc Autonomous Agent Teams: Collaboration
  without Pre-Coordination}. In \bibinfo{booktitle}{\emph{Proceedings of the
  AAAI Conference on Artificial Intelligence}}, Vol.~\bibinfo{volume}{24}.
  \bibinfo{publisher}{AAAI Press}, \bibinfo{address}{Atlanta, Georgia, USA},
  \bibinfo{pages}{1504--1509}.
\newblock
\href{https://doi.org/10.1609/aaai.v24i1.7529}{doi:\nolinkurl{10.1609/aaai.v24i1.7529}}


\bibitem[Tavoni et~al\mbox{.}(2011)]%
        {tavoni2011inequality}
\bibfield{author}{\bibinfo{person}{Alessandro Tavoni}, \bibinfo{person}{Astrid
  Dannenberg}, \bibinfo{person}{Giorgos Kallis}, {and} \bibinfo{person}{Andreas
  L{\"o}schel}.} \bibinfo{year}{2011}\natexlab{}.
\newblock \showarticletitle{Inequality, communication, and the avoidance of
  disastrous climate change in a public goods game}.
\newblock \bibinfo{journal}{\emph{Proceedings of the National Academy of
  Sciences}} \bibinfo{volume}{108}, \bibinfo{number}{29}
  (\bibinfo{year}{2011}), \bibinfo{pages}{11825--11829}.
\newblock
\href{https://doi.org/10.1073/pnas.1102493108}{doi:\nolinkurl{10.1073/pnas.1102493108}}


\bibitem[Terrucha et~al\mbox{.}(2024)]%
        {terrucha2024art}
\bibfield{author}{\bibinfo{person}{In{\^e}s Terrucha}, \bibinfo{person}{Elias
  Fern{\'a}ndez~Domingos}, \bibinfo{person}{Francisco~C. Santos},
  \bibinfo{person}{Pieter Simoens}, {and} \bibinfo{person}{Tom Lenaerts}.}
  \bibinfo{year}{2024}\natexlab{}.
\newblock \showarticletitle{The art of compensation: How hybrid teams solve
  collective-risk dilemmas}.
\newblock \bibinfo{journal}{\emph{PLOS ONE}} \bibinfo{volume}{19},
  \bibinfo{number}{2} (\bibinfo{year}{2024}), \bibinfo{pages}{e0297213}.
\newblock
\href{https://doi.org/10.1371/journal.pone.0297213}{doi:\nolinkurl{10.1371/journal.pone.0297213}}


\bibitem[Tewolde et~al\mbox{.}(2026)]%
        {tewolde2026coopeval}
\bibfield{author}{\bibinfo{person}{Emanuel Tewolde}, \bibinfo{person}{Xiao
  Zhang}, \bibinfo{person}{David Guzman~Piedrahita}, \bibinfo{person}{Vincent
  Conitzer}, {and} \bibinfo{person}{Zhijing Jin}.}
  \bibinfo{year}{2026}\natexlab{}.
\newblock \showarticletitle{{CoopEval}: Benchmarking Cooperation-Sustaining
  Mechanisms and {LLM} Agents in Social Dilemmas}. In
  \bibinfo{booktitle}{\emph{Proceedings of the 43rd International Conference on
  Machine Learning (ICML)}}. \bibinfo{publisher}{PMLR},
  \bibinfo{address}{Seoul, South Korea}, \bibinfo{numpages}{65}~pages.
\newblock
\urldef\tempurl%
\url{https://icml.cc/virtual/2026/poster/66503}
\showURL{%
\tempurl}


\bibitem[Traulsen and Nowak(2006)]%
        {traulsen2006evolution}
\bibfield{author}{\bibinfo{person}{Arne Traulsen} {and}
  \bibinfo{person}{Martin~A. Nowak}.} \bibinfo{year}{2006}\natexlab{}.
\newblock \showarticletitle{Evolution of cooperation by multilevel selection}.
\newblock \bibinfo{journal}{\emph{Proceedings of the National Academy of
  Sciences}} \bibinfo{volume}{103}, \bibinfo{number}{29}
  (\bibinfo{year}{2006}), \bibinfo{pages}{10952--10955}.
\newblock
\href{https://doi.org/10.1073/pnas.0602530103}{doi:\nolinkurl{10.1073/pnas.0602530103}}


\bibitem[Traulsen et~al\mbox{.}(2006)]%
        {traulsen2006stochastic}
\bibfield{author}{\bibinfo{person}{Arne Traulsen}, \bibinfo{person}{Martin~A.
  Nowak}, {and} \bibinfo{person}{Jorge~M. Pacheco}.}
  \bibinfo{year}{2006}\natexlab{}.
\newblock \showarticletitle{Stochastic dynamics of invasion and fixation}.
\newblock \bibinfo{journal}{\emph{Physical Review E}} \bibinfo{volume}{74},
  \bibinfo{number}{1} (\bibinfo{year}{2006}), \bibinfo{pages}{011909}.
\newblock
\href{https://doi.org/10.1103/PhysRevE.74.011909}{doi:\nolinkurl{10.1103/PhysRevE.74.011909}}


\bibitem[Tuyls et~al\mbox{.}(2018)]%
        {tuyls2018generalised}
\bibfield{author}{\bibinfo{person}{Karl Tuyls}, \bibinfo{person}{Julien
  Perolat}, \bibinfo{person}{Marc Lanctot}, \bibinfo{person}{Joel~Z. Leibo},
  {and} \bibinfo{person}{Thore Graepel}.} \bibinfo{year}{2018}\natexlab{}.
\newblock \showarticletitle{A Generalised Method for Empirical Game Theoretic
  Analysis}. In \bibinfo{booktitle}{\emph{Proceedings of the 17th International
  Conference on Autonomous Agents and Multiagent Systems (AAMAS 2018)}}.
  \bibinfo{publisher}{IFAAMAS}, \bibinfo{address}{Stockholm, Sweden},
  \bibinfo{pages}{77--85}.
\newblock
\urldef\tempurl%
\url{https://ifaamas.org/Proceedings/aamas2018/pdfs/p77.pdf}
\showURL{%
\tempurl}


\bibitem[Vallinder and Hughes(2025)]%
        {vallinder2025cultural}
\bibfield{author}{\bibinfo{person}{Aron Vallinder} {and}
  \bibinfo{person}{Edward Hughes}.} \bibinfo{year}{2025}\natexlab{}.
\newblock \showarticletitle{Cultural Evolution of Cooperation among {LLM}
  Agents}. In \bibinfo{booktitle}{\emph{Proceedings of the 24th International
  Conference on Autonomous Agents and Multiagent Systems (AAMAS 2025)}}.
  \bibinfo{publisher}{IFAAMAS}, \bibinfo{address}{Detroit, MI, USA},
  \bibinfo{pages}{2771--2773}.
\newblock
\href{https://doi.org/10.65109/jnmb7739}{doi:\nolinkurl{10.65109/jnmb7739}}


\bibitem[Wellman(2006)]%
        {wellman2006methods}
\bibfield{author}{\bibinfo{person}{Michael~P. Wellman}.}
  \bibinfo{year}{2006}\natexlab{}.
\newblock \showarticletitle{Methods for Empirical Game-Theoretic Analysis
  (Extended Abstract)}. In \bibinfo{booktitle}{\emph{Proceedings of the
  Twenty-First National Conference on Artificial Intelligence (AAAI-06)}}.
  \bibinfo{publisher}{AAAI Press}, \bibinfo{address}{Boston, Massachusetts,
  USA}, \bibinfo{pages}{1552--1555}.
\newblock
\urldef\tempurl%
\url{https://aaai.org/Papers/AAAI/2006/AAAI06-248.pdf}
\showURL{%
\tempurl}


\bibitem[Willis et~al\mbox{.}(2026)]%
        {willis2026evaluating}
\bibfield{author}{\bibinfo{person}{Richard Willis}, \bibinfo{person}{Jianing
  Zhao}, \bibinfo{person}{Yali Du}, {and} \bibinfo{person}{Joel~Z. Leibo}.}
  \bibinfo{year}{2026}\natexlab{}.
\newblock \bibinfo{title}{Evaluating Collective Behaviour of Hundreds of {LLM}
  Agents}.
\newblock
\shownote{Strategic Engineering Workshop on LLMs and Game Theory (SE@AAMAS
  2026)}.
\newblock
\showeprint[arxiv]{2602.16662}
\href{https://doi.org/10.48550/arXiv.2602.16662}{doi:\nolinkurl{10.48550/arXiv.2602.16662}}


\bibitem[Zisis et~al\mbox{.}(2015)]%
        {zisis2015generosity}
\bibfield{author}{\bibinfo{person}{Ioannis Zisis}, \bibinfo{person}{Sibilla
  Di~Guida}, \bibinfo{person}{The~Anh Han}, \bibinfo{person}{Georg
  Kirchsteiger}, {and} \bibinfo{person}{Tom Lenaerts}.}
  \bibinfo{year}{2015}\natexlab{}.
\newblock \showarticletitle{Generosity motivated by acceptance - evolutionary
  analysis of an anticipation game}.
\newblock \bibinfo{journal}{\emph{Scientific Reports}} \bibinfo{volume}{5},
  \bibinfo{number}{1} (\bibinfo{year}{2015}), \bibinfo{pages}{18076}.
\newblock
\href{https://doi.org/10.1038/srep18076}{doi:\nolinkurl{10.1038/srep18076}}


\end{thebibliography}
